\newcommand{\E}{\text{E}}
\newcommand{\diam}{\text{diam}}
\newcommand{\VLow}{V_\text{low}}
\newcommand{\VHigh}{V_\text{high}}
\newcommand{\KTZero}{$\text{KT}_0$}
\newcommand{\KTOne}{$\text{KT}_1$}
\def\polylog{\operatorname{polylog}}
\def\poly{\operatorname{poly}}
\title{Time-Message Trade-Offs in Distributed Algorithms}
\author{Robert Gmyr}{Department of Computer Science, University of Houston, Houston, USA}{rgmyr@uh.edu}{}{}
\author{Gopal Pandurangan}{Department of Computer Science, University of Houston, Houston, USA}{gopalpandurangan@gmail.com}{}{}
\authorrunning{R. Gmyr and G. Pandurangan}
\subjclass{\ccsdesc[500]{Theory of computation~Distributed algorithms}}
\keywords{Randomized Algorithm, \KTOne{}  Model, Sparsifier, MST, Singular Optimality}
\begin{document}
\maketitle

\begin{abstract}
This paper focuses on showing time-message trade-offs in distributed algorithms for fundamental problems such as leader election, broadcast, spanning tree (ST), minimum spanning tree (MST), minimum cut, and many graph verification problems.
We consider  the  synchronous CONGEST distributed computing model and assume that each node has initial knowledge of itself and the identifiers of its neighbors --- the so-called {\em \KTOne{} model} --- a well-studied model that also naturally arises in many applications.
Recently, it has been established that one can obtain (almost) {\em singularly optimal} algorithms, i.e., algorithms that have {\em simultaneously} optimal time and message complexity (up to polylogarithmic factors), for many fundamental problems in the standard \KTZero{} model (where nodes have only local knowledge of themselves and not their neighbors).
The situation is less clear in the \KTOne{} model.
In this paper, we present several new distributed algorithms in the \KTOne{} model that trade off between time and message complexity.

Our distributed algorithms are based on a uniform and general approach which involves constructing a {\em sparsified spanning subgraph} of the original graph --- called a {\em danner}  ---
that trades off the {\em number of  edges} with the {\em diameter} of the sparsifier.
In particular, a key ingredient of our approach is a distributed randomized algorithm that, given a graph $G$ and any $\delta \in [0,1]$, with high probability\footnote{Throughout, by ``with high probability (w.h.p.)'' we mean with probability at least $1 - 1 / n^c$ where $n$ is the network size and $c$ is some constant.}
constructs a danner that has diameter $\tilde{O}(D + n^{1-\delta})$ and $\tilde{O}(\min \{m,n^{1+\delta} \})$ edges in $\tilde{O}(n^{1-\delta})$ rounds while using $\tilde{O}(\min\{m,n^{1+\delta}\})$ messages, where $n$, $m$, and $D$ are the number of nodes, edges, and the diameter of $G$, respectively.%
\footnote{The notation $\tilde{O}$ hides a $\text{polylog}(n)$ factor.}
Using our danner construction, we present a family of distributed randomized algorithms for various fundamental problems that exhibit a trade-off between message and time complexity and that improve over previous results.
Specifically, we show the following results (all hold with high probability) in the \KTOne{} model, which subsume and improve over prior bounds in the \KTOne{} model (King et al., PODC 2014 and Awerbuch et al., JACM 1990) and the \KTZero{} model (Kutten et al., JACM 2015, Pandurangan et al., STOC 2017 and Elkin, PODC 2017):
\begin{enumerate}
 \item {\bf Leader Election, Broadcast, and ST.}
  These problems can be solved in $\tilde{O}(D+n^{1-\delta})$ rounds using $\tilde{O}(\min\{m,n^{1+\delta}\})$ messages for any $\delta \in [0,1]$.
  \item {\bf  MST and Connectivity.}
  These problems can be solved in $\tilde{O}(D+n^{1-\delta})$ rounds using $\tilde{O}(\min\{m,n^{1+\delta}\})$ messages for any $\delta \in [0,0.5]$. In particular, for $\delta = 0.5$ we obtain a distributed MST algorithm that runs in optimal $\tilde{O}(D+\sqrt{n})$ rounds and uses $\tilde{O}(\min\{m,n^{3/2}\})$ messages. We note that this improves over the singularly optimal algorithm in the \KTZero{} model that uses $\tilde{O}(D+\sqrt{n})$ rounds and $\tilde{O}(m)$ messages.
  \item {\bf  Minimum Cut.}
  $O(\log n)$-approximate minimum cut can be solved in $\tilde{O}(D+n^{1-\delta})$ rounds using $\tilde{O}(\min\{m,n^{1+\delta}\})$ messages for any $\delta \in [0,0.5]$.
  \item {\bf Graph Verification Problems such as Bipartiteness,  Spanning Subgraph etc.}
  These can be solved in $\tilde{O}(D+n^{1-\delta})$ rounds using $\tilde{O}(\min\{m,n^{1+\delta}\})$ messages for any $\delta \in [0,0.5]$.
\end{enumerate}
\end{abstract}

\section{Introduction} \label{sec:intro}

This paper focuses on a fundamental aspect of distributed  algorithms:
{\em trade-offs} for the two basic complexity measures of {\em time} and
{\em messages}. The efficiency of distributed algorithms is traditionally measured
by their time and message (or communication) complexities.
Both complexity measures crucially influence the performance
of a distributed algorithm. Time complexity measures the number of
distributed ``rounds'' taken by the algorithm and determines
the running time of the algorithm.  Obviously, it is of interest to keep
the running time as small as possible.
Message complexity, on the other hand, measures the total amount of messages
sent and received by all the processors during the course of the
algorithm. In many applications, this is the dominant cost that also plays
a major role in determining the running time and additional costs
(e.g., energy) expended by the algorithm.
For example,  communication cost is one of the dominant costs
in the distributed computation of large-scale data~\cite{soda15}.
In another example,  in certain types of communication networks, such as ad-hoc
wireless networks, energy is a critical factor for measuring
the efficiency of a distributed algorithm \cite{khan-tpds2,choi-journal}. Transmitting a message between two
nodes in such a network has an associated energy cost and hence the total message complexity plays an important role in determining the energy
cost of an algorithm. At the same time, keeping the number of rounds small also helps in reducing the energy cost.
Thus, in various modern and emerging applications such as
resource-constrained communication networks and distributed computation
on large-scale data, it is crucial to design distributed algorithms
that optimize  both measures {\em simultaneously} \cite{podc15,soda15,spaa16,spaa18}.

Unfortunately, designing algorithms that are simultaneously time- and message-efficient has proven to be a challenging  task,
which (for some problems) stubbornly defied all prior attempts of attack.
Consequently, research in the last three decades in the area of distributed
algorithms has focused mainly on optimizing either one of the two measures
separately, typically at the cost of neglecting the other.
In this paper, we  focus  on studying the two
cost measures of time and messages \emph{jointly}, and exploring ways to design
distributed algorithms that work well under both measures
(to the extent possible).
Towards this goal, it is important to understand the
relationship between these two measures. In particular, as defined in \cite{Pandurangan0S17}, we should be able to
determine, for specific problems, whether it is possible to devise
a distributed algorithm that is either {\em singularly optimal} or  exhibits a {\em time-message trade-off}:
\begin{itemize}
\item {\bf Singularly optimal:} A distributed algorithm that is optimal with respect to both measures simultaneously
--- in which case we say that the problem enjoys {\em singular optimality}.
\item {\bf Time-message trade-off:}  Whether the problem inherently fails to admit  a  singularly optimal solution, namely,
algorithms of better time complexity for it will necessarily incur higher
message complexity and vice versa --- in which case we say that the problem exhibits a {\em time-message trade-off}.
\end{itemize}
We note that, more generally, finding a simultaneously  optimal (or almost optimal) solution may sometimes be
difficult even for ``singular optimality'' problems and hence it might be useful to first design algorithms that
have trade-offs.

This paper focuses on showing time-message trade-offs in distributed algorithms for fundamental problems such
 as leader election, broadcast, spanning tree (ST), minimum spanning tree (MST), minimum cut, and many graph verification problems. Throughout, we consider  the   synchronous CONGEST  model (see Section \ref{sec:model} for details),  a standard model in distributed computing where computation proceeds in discrete (synchronous) {\em rounds} and
 in each round only $O(\log n)$ bits are allowed to be exchanged per edge (CONGEST) where $n$ is the number of nodes in the network.  It turns out that message complexity of a distributed algorithm
 depends crucially  (as explained below) on the initial knowledge of the nodes; in this respect, there are two well-studied models --- the \KTZero{} and the \KTOne{} model.\footnote{On the other hand, for time complexity it does not really matter whether nodes have initial knowledge of just themselves (\KTZero{}) or also of their neighbors (\KTOne{}); this is because this information (i.e., the identifiers of the neighbors) can be exchanged in one round in the CONGEST model. Hence, when focusing solely on optimizing time complexity, which is the typically the case in the literature, the distinction between \KTZero{} and \KTOne{} is not important and the actual model is not even explicitly specified.}
  In the {\bf \KTZero{}} model (i.e., {\bf K}nowledge {\bf T}ill radius {\bf 0}), also called the {\em clean network model} \cite{peleg},  where nodes have initial local knowledge of only themselves (and not their neighbors),  it has (only) been recently established that one can obtain (almost) {\em singularly optimal} algorithms, i.e., algorithms that have {\em simultaneously} optimal time and message complexity (up to polylogarithmic factors), for many fundamental problems such as leader election, broadcast, ST, MST, minimum cut, and approximate shortest paths (under some conditions) \cite{JACM15,Pandurangan0S17,elkin17,Haeupler18}. More precisely, for problems such as leader election, broadcast, and ST, it has been shown \cite{JACM15} that one can design a singularly optimal algorithm in the \KTZero{} model, that takes $\tilde{O}(m)$ messages and $O(D)$ rounds --- both are tight (up to a $\text{polylog}(n)$ factor); this is
 because $\Omega(m)$ and $\Omega(D)$ are, respectively,  lower bounds on the message and time complexity for these problems in the \KTZero{} model \cite{JACM15} --- note that these lower bounds hold even for randomized Monte Carlo algorithms. The work of \cite{Pandurangan0S17}  (also see \cite{elkin17}) showed that MST is also (almost) singularly optimal, by giving a (randomized Las Vegas) algorithm that takes
 $\tilde{O}(m)$ messages and $\tilde{O}(D+\sqrt{n})$ rounds (both bounds are tight up to polylogarithmic factors). It can be shown that the singular optimality of MST in the \KTZero{} model  also implies the singular optimality of many other problems such as approximate minimum cut and graph verification problems (see Section \ref{sec:overview}).
 Recently, it was shown that approximate shortest paths and several other problems also admit singular optimality in the \KTZero{} model \cite{Haeupler18}.

 On the other hand, in the {\bf \KTOne{}} model (i.e., {\bf K}nowledge {\bf T}ill radius {\bf 1}), in which each node has initial knowledge of itself and the  {\em identifiers}\footnote{Note that only knowledge of the identifiers of neighbors is assumed, not other information such as the degree of the neighbors.} of its neighbors, the situation is less clear. The \KTOne{} model arises naturally in many settings, e.g., in networks where nodes know the identifiers
 of their neighbors (as well as other nodes), e.g., in the Internet where a node knows the IP addresses of other nodes~\cite{gopal-survey}.
 Similarly in models such as the $k$-machine model  (as well as the congested clique), it is natural to assume that each processor knows
 the identifiers of all other processors \cite{soda15,podc15,spaa16,spaa18}.
 For the \KTOne{} model, King et al.~\cite{KingKT15} showed a surprising and elegant result: There is a randomized Monte Carlo algorithm to construct an MST in $\tilde{O}(n)$ messages ($\Omega(n)$ is a message lower bound) and in $\tilde{O}(n)$ time (see Section \ref{sec:KKT}).
 Thus it is also possible to construct an ST, do leader election, and broadcast within these bounds.
 This algorithm is randomized and {\em not  comparison-based}.\footnote{Awerbuch et al. \cite{vainish} show that $\Omega(m)$ is a message lower bound for MST even in the \KTOne{} model,
if one allows only (possibly randomized Monte Carlo) {\em comparison-based} algorithms, i.e., algorithms that can operate on identifiers only by comparing them. The result of King et al.~\cite{KingKT15} breaches the $\Omega(m)$ lower bound by using non-comparison-based technique by using the identifiers as input to hash functions. Our results also breach the $\Omega(m)$ lower bound, since we use the results of King et al.\ as subroutines in our algorithms.}
 While this algorithm shows that one can achieve $o(m)$ message complexity (when $m = \omega(n \polylog n)$), it is {\em not} time-optimal --- it can take significantly more than $\tilde \Theta(D+\sqrt{n})$ rounds, which is a time lower bound even for Monte-Carlo randomized algorithms~\cite{stoc11}. In subsequent work, Mashreghi and King~\cite{MashreghiK17}  presented a trade-off between messages and time for MST: a Monte-Carlo algorithm that takes $\tilde{O}(n^{1+\epsilon}{\epsilon})$ messages and runs in $O(n/\epsilon)$ time for any $1 > \epsilon \geq \log \log n/\log n$. We note that this algorithm takes at least $O(n)$ time.

{\em A central motivating theme underlying this work is understanding the status of various fundamental problems in the \KTOne{} model --- whether they are singularly optimal or exhibit trade-offs (and, if so, to quantify the trade-offs).}
In particular, it is an open question whether one can design a randomized  (non-comparison based) MST algorithm that takes $\tilde{O}(D + \sqrt{n})$ time and $\tilde{O}(n)$ messages in the $KT_1$ model --- this would show that MST is  (almost) singularly optimal in the \KTOne{} model as well. In fact, King et al \cite{KingKT15} ask whether it is possible to construct (even) an ST in $o(n)$ rounds with $o(m)$ messages. Moreover, can we take advantage of
 the \KTOne{} model to get improved message bounds (while keeping time as small as possible) in comparison to the \KTZero{} model?

\subsection{Our Contributions and Comparison with Related Work} \label{sec:results}

In this paper, we present several results that show trade-offs between time and messages in the \KTOne{} model with respect to various problems. As a byproduct, we improve and subsume the results of \cite{KingKT15} (as well as of Awerbuch et al.~\cite{vainish}) and answer the question raised by King et al.\ on ST/MST construction at the end of the previous paragraph in the affirmative. We also show
that our results give improved bounds compared to  the results in the \KTZero{} model, including for the fundamental distributed MST problem.

 Our time-message trade-off results are based on a uniform and general approach which involves constructing
 a {\em sparsified spanning subgraph} of the original graph --- called a {\em danner} (i.e., ``diameter-preserving spanner'') --- that trades off the {\em number of  edges} with the {\em diameter} of the sparsifier (we refer to Section \ref{sec:definitions} for
   a precise definition).
{\em In particular, a key ingredient of our approach is a distributed randomized algorithm that, given a graph $G$ and any $\delta \in [0,1]$, with high probability%
\footnote{Throughout, by ``with high probability (w.h.p.)'' we mean with probability at least $1 - 1 / n^c$ where $n$ is the network size and $c$ is some constant.}
constructs a danner that has diameter $\tilde{O}(D + n^{1-\delta})$ and $\tilde{O}(\min \{m,n^{1+\delta} \})$ edges in $\tilde{O}(n^{1-\delta})$ rounds while using $\tilde{O}(\min\{m,n^{1+\delta}\})$ messages, where $n$, $m$, and $D$ are the number of nodes, edges, and the diameter of $G$, respectively.%
\footnote{The notation $\tilde{O}$ hides a $\text{polylog}(n)$ factor.}}
Using our danner construction,
 we present a family of distributed randomized algorithms for various fundamental problems that exhibit a trade-off between message and time complexity and that improve over previous results. Specifically, we show the following results (all hold with high probability) in the \KTOne{} model (cf.\ Section \ref{sec:app}):
 \begin{enumerate}
 \item {\bf Leader Election, Broadcast, and ST.}
 These problems can be solved in $\tilde{O}(D+n^{1-\delta})$ rounds using $\tilde{O}(\min\{m,n^{1+\delta}\})$ messages for any $\delta \in [0,1]$.
 These results
improve over
 prior bounds in the \KTOne{} model \cite{KingKT15, MashreghiK17,vainish} as well the \KTZero{} model  \cite{JACM15} --- discussed
 earlier in Section \ref{sec:intro}. In particular, while the time bounds in \cite{KingKT15,MashreghiK17} are always at least linear, our bounds can be sublinear and the desired time-message trade-off can be obtained by choosing an appropriate $\delta$.
 It is worth noting that the early work of Awerbuch et al.~\cite{vainish} showed that broadcast  can be solved by a deterministic algorithm in the $\text{KT}_\rho$ model using $O(\min\{m,n^{1+c/\rho}\})$ messages for some fixed constant $c>0$ in a model where each node has knowledge of the {\em topology} (not just identifiers) up to radius $\rho$. Clearly, our results improve over this (for the \KTOne{} model), since $\delta$ can be made arbitrarily small.
  \item {\bf MST and Connectivity.}
  These problems can be solved in $\tilde{O}(D+n^{1-\delta})$ rounds using $\tilde{O}(\min\{m,n^{1+\delta}\})$ messages for any $\delta \in [0,0.5]$.
  In addition to getting any desired trade-off (by plugging in an appropriate $\delta$),
  we can get a time optimal algorithm by choosing $\delta = 0.5$, which results in a distributed MST algorithm that runs in $\tilde{O}(D+\sqrt{n})$ rounds and uses $\tilde{O}(\min\{m,n^{3/2}\})$ messages.
  We note that when $m = \tilde{\omega}(n^{3/2})$, this improves over the recently proposed singularly optimal algorithms of \cite{Pandurangan0S17,elkin17} in the \KTZero{} model  that use $\tilde{O}(m)$ messages
 and   $\tilde{O}(D+\sqrt{n})$ rounds. It also subsumes and improves over the prior results of \cite{MashreghiK17,KingKT15} in the \KTOne{} model that take (essentially) $\tilde{O}(n)$ messages and $\tilde{O}(n)$ time.\footnote{Mashreghi and King \cite{MashreghiK17} also give an algorithm
with round complexity $\tilde{O}(\diam(\text{MST}))$ and with message complexity $\tilde{O}(n)$, where $\diam(\text{MST})$
is the diameter of the output MST which can be as large as $\Theta(n)$.}
 \item {\bf Minimum Cut.}
 An $O(\log n)$-approximation to the minimum cut  value (i.e., edge connectivity of the graph) can be obtained in $\tilde{O}(D+n^{1-\delta})$ rounds using $\tilde{O}(\min\{m,n^{1+\delta}\})$ messages for any $\delta \in [0,0.5]$.  Our result improves over the works of \cite{ghaffari-kuhn,hao}  that are (almost) time-optimal (i.e.,
 take $\tilde{O}(D+\sqrt{n})$ rounds), but not message optimal.
 In addition to getting any desired trade-off (by plugging in an appropriate $\delta$), in particular, if $\delta = 0.5$, we obtain a $\tilde{O}(\min\{m,n^{3/2}\})$ messages approximate minimum cut algorithm that runs in (near) optimal $\tilde{O}(D+\sqrt{n})$ rounds. This improves the best possible bounds (for $m = \tilde{\omega}(n^{3/2})$)  that can be obtained in the \KTZero{} model (cf.\ Section \ref{sec:overview}).
  \item {\bf Graph Verification Problems such as Bipartiteness, $s-t$ Cut, Spanning Subgraph.} These can be solved in $\tilde{O}(D+n^{1-\delta})$ rounds using $\tilde{O}(\min\{m,n^{1+\delta}\})$ messages for any $\delta \in [0,0.5]$.
 \end{enumerate}

\subsection{High-Level Overview of Approach} \label{sec:overview}

\textbf{Danner.}
A main technical contribution of this work is the notion of a {\em danner} and its efficient distributed construction that jointly focuses on both time and messages. As defined in Section \ref{sec:definitions}, a danner of a graph $G$ is a spanning subgraph $H$ of $G$ whose diameter, i.e., $\diam(H)$, is at most $\alpha(\diam(G)) + \beta$, where $\alpha \geq 1$ and $\beta \geq 0$ are some parameters. The goal is to construct a danner $H$ with as few edges as possible and with
$\alpha$ and $\beta$ as small as possible. It is clear that very sparse danners exist: the BFS (breadth-first spanning) tree has only $n-1$ edges and its diameter is at most twice the diameter of the graph. However, it is not clear
how to construct such a danner in a way that is efficient with respect to both messages and time, in particular in
$\tilde{O}(n)$ messages and $\tilde{O}(D)$ time, or even $o(m)$ messages and $O(D)$ time, where $D = \diam(G)$.
Note that in the \KTZero{} model, there is a tight lower bound (with almost matching upper bound) for constructing a danner: any distributed danner construction algorithm needs $\Omega(m)$ messages and $\Omega(D)$ time (this follows by reduction from leader election
which has these lower bounds \cite{JACM15} --- see Section \ref{sec:intro}).  However, in the \KTOne{} model, the status for danner construction is not known. We give (a family of) distributed algorithms for constructing a danner that trade off messages for time (Section \ref{sec:dannerConstruction}).

\noindent \textbf{Danner Construction.}
We present an algorithm (see Algorithm~\ref{alg:general}) that, given a graph $G$ and any $\delta \in [0,1]$, constructs a danner $H$ of $G$ that has $\tilde{O}(\min\{m,n^{1+\delta}\})$ edges and diameter $\tilde{O}(D+n^{1-\delta})$ (i.e., an $\tilde{O}(n^{1-\delta})$ additive danner) using $\tilde{O}(\min\{m,n^{1+\delta}\})$ messages and in $\tilde{O}(n^{1-\delta})$ time (note that the time does not depend on $D$).
The main idea behind the algorithm is as follows.
While vertices with low degree (i.e., less than $n^{\delta}$) and their incident edges can be included in the danner $H$, to handle high-degree vertices we construct a dominating set that dominates the high-degree nodes by sampling roughly $n^{1 - \delta}$ nodes (among all nodes); these are called ``center'' nodes.\footnote{
The idea of establishing a set of nodes that dominates all high-degree nodes has also been used by Aingworth et al.~\cite{AingworthCIM99} and Dor et al.~\cite{DorHZ00}.}
Each node $v$ adds the edges leading to its $\min\{\deg(v), n^\delta\}$ neighbors with the lowest identifiers (required for maintaining independence from random sampling) to $H$.
It is not difficult to argue that each high-degree node is connected to a center in $H$ and we use a relationship between the number of nodes in any dominating set and the diameter  (cf.\ Lemma \ref{lem:diameterDominationNumberBound}) to argue that the diameter of each connected component (or fragment) of $H$ is at most $\tilde{O}(n^{1-\delta})$.
We then use the FindAny algorithm of King et al.~\cite{KingKT15} to efficiently implement a distributed Boruvka-style merging (which is essentially the GHS algorithm \cite{GallagerHS83}) of fragments in the subgraph induced by the high-degree nodes and the centers.
The FindAny algorithm does not rely on identifier comparisons but instead uses random hash-functions to find an edge leaving a set of nodes very efficiently, which is crucial for our algorithm.
In each merging phase, each fragment uses FindAny to efficiently find an outgoing edge; discovered outgoing edges are added to $H$.
Only $O(\log n)$ iterations are needed to merge all fragments into a connected graph and only $\tilde{O}(\min\{m,n^{1 + \delta}\})$ messages are needed overall for merging.
At any time the set of centers in a fragment forms a dominating set of that fragment.
Thereby, the above-mentioned relationship between dominating sets and diameters guarantees that the diameters of the fragments stay within $\tilde{O}(n^{1-\delta})$.
We argue (Lemma \ref{lem:dannerfinal}) that the constructed subgraph $H$ is an additive $\tilde{O}(n^{1-\delta})$-danner of $G$.

\noindent \textbf{Danner Applications.}
What is the motivation for defining a danner and why is it useful?
The answer to both of these questions is that a danner gives a uniform way to design distributed algorithms that are both time and message efficient for various applications as demonstrated in Section~\ref{sec:app}.
Results for leader election, broadcast, and ST construction follow quite directly (cf.\ Section \ref{sec:le}):
Simply construct a danner and run the singularly optimal algorithm of \cite{JACM15} for the \KTZero{} model on the danner subgraph.
Since the danner has $\tilde{O}(n^{1+\delta})$ edges and has diameter $\tilde{O}(D+n^{1-\delta})$, this gives the required bounds.

A danner can be used to construct a MST of a graph $G$ (which also gives checking connectivity of a subgraph $H$
of $G$) using $\tilde{O}(\min\{m,n^{1+\delta}\})$ messages in
$\tilde{O}(D+n^{1-\delta})$ time, for any $\delta \in [0,0.5]$. Note that this subsumes the bounds of the singularly optimal
algorithms in the \KTZero{} model \cite{Pandurangan0S17,elkin17}.
The  distributed MST construction (cf.\ Section \ref{sec:mst}) proceeds in three steps; two of these crucially use
the danner. In the first step, we construct a danner and use it as a communication backbone to aggregate the degrees
of all nodes and thus determine $m$, the number of edges. If $m \leq n^{1+\delta}$, then we simply proceed
to use the singularly optimal algorithm of \cite{Pandurangan0S17}. Otherwise, we proceed to Step~2, where we do {\em Controlled-GHS}
which is a well-known ingredient in prior MST algorithms~\cite{dnabook,DistMst:Garay,Pandurangan0S17}. Controlled-GHS is simply Boruvka-style MST algorithm, where the diameter of all the fragments grow at a controlled rate. We use the graph sketches technique
of King et al. (in particular the FindMin algorithm --- cf.\ Section \ref{sec:KKT}) for finding outgoing edges  to keep the message complexity to $\tilde{O}(n)$. Crucially we run Controlled-GHS to only $\lceil (1-\delta)\log n\rceil$ iterations
so that the number of fragments remaining at the end of Controlled-GHS is $O(n^{\delta})$ with each having diameter
$\tilde{O}(n^{1-\delta})$; all these take only $\tilde{O}(n^{1-\delta})$ time, since the running time of Controlled-GHS
 is asymptotically bounded (up to a $\log n$ factor) by the (largest) diameter of any fragment.
 In Step~3, we merge the remaining $O(n^{\delta})$ fragments; this is done in a way that is somewhat different to
 prior MST algorithms, especially those of \cite{Pandurangan0S17,DistMst:Garay}. We simply continue the Boruvka-merging (not necessarily in a controlled way), but instead
 of using each fragment as the communication backbone, we do the merging ``globally'' using a BFS tree of the danner subgraph. The BFS tree of the danner has $O(n)$ edges and has diameter $\tilde{O}(D+n^{1-\delta})$. In each merging phase,
 each node forwards at most $O(n^{\delta})$ messages (sketches corresponding to so many distinct fragments) to the root of the BFS tree which finds the outgoing edge corresponding to each fragment (and broadcasts it to all the nodes).
 The total message cost is $\tilde{O}(n^{1+\delta})$ and, since the messages are pipelined, the total time is $\tilde{O}(D+n^{1-\delta}+n^{\delta}) = \tilde{O}(D+n^{1-\delta})$ (for $\delta = [0,0.5]$).
 Building upon this algorithm, we give time and message efficient algorithms for $O(\log n)$-approximate minimum cut and graph connectivity problems (cf.\ Section \ref{sec:cut}).

\subsection{Other Related Work} \label{sec:related}

There has been extensive research on the distributed MST problem in the \KTZero{} model, culminating in the
singularly optimal (randomized) algorithm of \cite{Pandurangan0S17} (see also \cite{elkin17}); see \cite{Pandurangan0S17} for a survey of
results in this area. The work of \cite{Pandurangan0S17} also defined the notions of {\em singular optimality} versus {\em time-message trade-offs}. Kutten et al.~\cite{JACM15} show the singular optimality of leader election (which implies the
same for broadcast and ST)  by giving tight lower bounds for both messages and time as well as giving algorithms
that simultaneously achieve the tight bounds (see Section \ref{sec:intro}).

 Compared to the \KTZero{} model, results in the \KTOne{} are somewhat less studied. The early work
of Awerbuch et al. \cite{vainish} studied time-message trade-offs for broadcast in the \KTOne{} model.
In 2015, King et al.~\cite{KingKT15} showed  surprisingly that the basic $\Omega(m)$ message lower bound that holds in the \KTZero{} model for various problems such as leader election, broadcast, MST, etc.~\cite{JACM15} can be breached in the \KTOne{} model by giving a randomized Monte Carlo algorithm to construct an MST in $\tilde{O}(n)$ messages  and in $\tilde{O}(n)$ time. The algorithm of King et al.\ uses a  powerful randomized technique of {\em graph sketches} which helps
in identifying edges going out of a cut efficiently without probing all the edges in the cut; this crucially helps in reducing
the message complexity. We heavily use this technique (as a black box) in our algorithms as well. However, note
that we could have also used other types of graph sketches (see e.g., \cite{spaa16}) which will yield similar results.

The \KTOne{} model has been assumed in other distributed computing models such as the $k$-machine model \cite{soda15,spaa16, spaa18} and the congested clique~\cite{podc15}. In \cite{podc15} it was shown
that the MST problem has a message lower bound of $\Omega(n^2)$ which can be breached in the \KTOne{} model
by using graph sketches.

Distributed minimum cut has been studied by \cite{ghaffari-kuhn,hao}, and the graph verification problems considered in this paper have been studied in \cite{stoc11}. However, the focus of these results has been on the time complexity (where \KTZero{} or \KTOne{} does not matter). We study these problems in the \KTOne{} model focusing on both time
and messages and present trade-off algorithms that also improve over the \KTZero{} algorithms (in terms of messages) --- cf.\ Section~\ref{sec:results}.
We note that $\tilde{\Omega}(D+ \sqrt{n})$ is a time lower bound for minimum cut (for any non-trivial approximation)
and for the considered graph verification problems \cite{stoc11}. It can be also shown (by using techniques
from \cite{JACM15}) that $\Omega(m)$ is
a message lower bound in the \KTZero{} model for minimum cut.

\section{Preliminaries} \label{sec:prelim}

Before we come to the main technical part of the paper, we introduce some notation and basic definitions, present our network model, and give an overview of some of the algorithms from the literature that we use for our results.

\subsection{Notation and Definitions} \label{sec:definitions}

For a graph $G$ we denote its \emph{node set} as $V(G)$ and its \emph{edge set} as $E(G)$.
For a node $u \in V(G)$ the set $N_G(u) = \{ v \in V(G) \mid \{u, v\} \in E(G) \}$ is the \emph{open neighborhood} of $u$ in $G$ and $\Gamma_G(u) = N_G(u) \cup \{ u \}$ is its \emph{closed neighborhood}.
For a set of nodes $S \subseteq V(G)$ we define $\Gamma_G(S) = \bigcup_{u \in S} \Gamma_G(u)$.
The \emph{degree} of a node $u$ in $G$ is $\deg_G(u) = |N_G(u)|$.
For a path $P = (u_0, \dots, u_\ell)$ we define $V(P)$ to be the set of nodes in $P$ and we define $|P| = \ell$ to be the \emph{length} of $P$.
A set $S \subseteq V(G)$ is a \emph{dominating set} of $G$ if $\Gamma_G(S) = V(G)$.
The \emph{domination number} $\gamma(G)$ of a graph $G$ is the size of a smallest dominating set of $G$.
The \emph{distance} $d_G(u, v)$ between two nodes $u, v \in V(G)$ is the length of a shortest path  between $u$ and $v$ in $G$.
We define the \emph{diameter} (or the {\em hop diameter}) of $G$ as $\diam(G) = \max_{u, v \in V(G)} d(u, v)$, where
the distances are taken in the graph by ignoring edge weights (i.e., each edge has weight 1).
For all of this notation, we omit $G$ when it is apparent from context.
For $S \subseteq V(G)$ the \emph{induced subgraph} $G[S]$ is defined by $V(G[S]) = S$ and $E(G[S]) = \{ \{u, v \} \in E(G) \mid u, v \in S \}.$
A subgraph $H \subseteq G$ is an \emph{$(\alpha, \beta)$-spanner} of $G$ if $V(H) = V(G)$ and $d_H(u, v) \le \alpha \cdot d_G(u, v) + \beta$ for all $u, v \in V(G)$.
In this work we make use of the weaker concept of a \emph{diameter-preserving spanner}, or in short, \emph{danner}:
A subgraph $H \subseteq G$ is a \emph{$(\alpha, \beta)$-danner} of $G$ if $V(H) = V(G)$ and $\diam(H) \le \alpha \cdot \diam(G) + \beta$.
We say $H$ is an additive $\beta$-danner if it is a $(1, \beta)$-danner.
An $(\alpha, \beta)$-spanner is also an $(\alpha, \beta)$-danner but the reverse is not generally true.
Hence, the notion of a danner is weaker than that of a spanner.

\subsection{Model} \label{sec:model}

We briefly describe the distributed computing model used.
This is the synchronous CONGEST model (see, e.g., \cite{dnabook,peleg}), which is now standard in the distributed computing literature.

A point-to-point communication network is modeled as an undirected weighted graph $G=(V,E,w)$,
where the vertices of $V$ represent the processors, the edges of $E$ represent the communication
links between them, and $w(e)$ is the weight of edge $e \in E$. Let $n = |V(G)|$ and $m = |E(G)|$. Without loss of generality,
we assume that $G$ is connected. $D$ denotes the hop-diameter (that is, the unweighted
diameter) of $G$, and, in this paper, by diameter we always mean hop-diameter.
Each node hosts a processor with limited initial knowledge. Specifically, we make the common
assumption that each node has a unique identifier (from $\{ 1, \dots, \text{poly}(n) \}$), and at the beginning of computation each
vertex $v$ accepts as input its own identifier and the weights (if any)
of the edges incident to it as well as the {\em identifiers of all its neighbors}. Thus, a node has  {\em local}
knowledge of only itself and its neighbor's identifiers; this is called the {\em \KTOne{} model}. Since each node
knows the identifier of the node on the other side of an incident edge, both endpoints can define a {\em common} edge identifier as the concatenation of identifiers of its endpoints, lowest identifier first.

The vertices are allowed to communicate through the edges of the graph $G$. It is assumed
that communication is synchronous and occurs in discrete rounds
(time steps).
In each time step, each node $v$ can send an arbitrary message of
$O(\log n)$ bits through each edge $e = \{v,u\}$ incident to $v$,
and each message arrives at $u$ by the end of this time step.
The  weights of the edges are at most
polynomial in the number of vertices $n$, and therefore the weight
of a single edge can be communicated in one time step. This model of
distributed computation is called the CONGEST$(\log n)$
model or simply the CONGEST model \cite{dnabook,peleg}. We also assume
that each vertex has access to the outcome of unbiased private coin flips.
We assume that all nodes know $n$.

\subsection{Underlying Algorithms} \label{sec:KKT}

We use an algorithm called \emph{TestOut} that was introduced by King et al.~\cite{KingKT15} in the context of computing MSTs in the \KTOne{} model.
Consider a tree $T$ that is a subgraph of a graph $G$.
The algorithm TestOut allows the nodes in $T$ to determine whether there exists an \emph{outgoing} edge, i.e., an edge that connects a node in $V(T)$ with a node in $V(G) \setminus V(T)$.
We also refer to an outgoing edge as an edge \emph{leaving} $T$.
Let $u$ be a node in $T$ that initiates an execution of TestOut.
On a high level, TestOut simply performs a single \emph{broadcast-and-echo} operation:
First, the node $u$ broadcasts a random hash function along $T$.
Each node in $T$ computes a single bit of information based on the hash function and the identifiers of the incident edges.
The parity of these bits is then aggregated using an echo (or convergecast) along $T$.
The parity is $1$ with constant probability if there is an edge in $G$ leaving $T$, and it is $0$ otherwise.
The algorithm is always correct if the parity is $1$.
The running time of the algorithm is $O(\diam(T))$ and it uses $O(|V(T)|)$ messages.

The correctness probability of TestOut can be amplified to high probability by executing the algorithm $O(\log n)$ times.
Furthermore, TestOut can be combined with a binary search on the edge identifiers to find the identifier of an outgoing edge if it exists, which adds another multiplicative factor of $O(\log n)$ to the running time and the number of messages used by the algorithm.
Finally, the procedure can also be used to find the identifier of an outgoing edge with minimum weight in a weighted graph by again using binary search on the edge weights at the cost of another multiplicative $O(\log n)$ factor.
All of these algorithms can be generalized to work on a connected subgraph $H$ that is not necessarily a tree:
A node $u \in V(H)$ first constructs a breadth-first search tree $T$ in $H$ and then executes one of the algorithms described above on $T$.
We have the following theorems.

\begin{theorem} \label{thm:findAny}
  Consider a connected subgraph $H$ of a graph $G$.
  There is an algorithm \emph{FindAny} that outputs the identifier of an arbitrary edge in $G$ leaving $H$ if there is such an edge and that outputs $\emptyset$ otherwise, w.h.p.
  The running time of the algorithm is $\tilde{O}(\diam(H))$ and it uses $\tilde{O}(|E(H)|)$ messages.
\end{theorem}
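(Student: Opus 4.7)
The approach is to reduce the problem on the connected subgraph $H$ to the tree case sketched informally before the statement, and then layer amplification and binary search on top of a single primitive invocation of TestOut on a BFS tree of $H$.

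First I would have an arbitrary node $u \in V(H)$ construct a breadth-first search tree $T$ of $H$ rooted at itself, using the standard flooding procedure restricted to edges of $H$. This costs $O(\diam(H))$ rounds and $O(|E(H)|)$ messages, and the resulting tree satisfies $V(T)=V(H)$ and $\diam(T) = O(\diam(H))$. Crucially, in the \KTOne{} model every $v \in V(H)$ still knows the identifier of each of its incident edges in $G$, not merely those in $H$; this is what makes a single broadcast-and-echo sufficient to probe the \emph{cut} around $V(H)$.

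Next I would invoke TestOut on $T$ using a random hash function $h$ that maps edge identifiers to $\{0,1\}$ (drawn from a pairwise-independent family so that $h$ itself has an $O(\log n)$-bit description and can be broadcast down $T$ in $O(\diam(H))$ rounds). Each node $v \in V(H)$ locally computes the XOR of $h(e)$ over all its incident edges in $G$, and the root gathers the global XOR via an upcast along $T$. Every edge with both endpoints in $V(H)$ is counted twice and cancels, while each edge leaving $H$ contributes exactly once, so the aggregated value equals $\bigoplus_{e \text{ leaving } H} h(e)$. This is identically $0$ when no outgoing edge exists and, by pairwise independence, is $1$ with probability at least $1/2$ otherwise. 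To boost correctness I would run $\Theta(\log n)$ independent copies of TestOut in parallel and declare ``outgoing edge exists'' iff some copy returns $1$; this drives the one-sided failure probability below $1/n^c$. To recover an actual identifier, I would then perform a binary search of depth $O(\log n)$ over the $\poly(n)$-sized edge-identifier range: at each level each node includes in its local XOR only those $h(e)$ whose edge identifier lies in the current interval, and an amplified TestOut decides which half still contains an outgoing edge.

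Summing costs, a single TestOut uses $O(\diam(T)) = O(\diam(H))$ rounds and $O(|V(T)|) = O(|V(H)|) = O(|E(H)|)$ messages (the last inequality holding because $H$ is connected). Building the BFS tree, the $O(\log n)$-fold amplification, and the $O(\log n)$ binary-search levels each contribute at most a polylogarithmic factor, yielding $\tilde{O}(\diam(H))$ rounds and $\tilde{O}(|E(H)|)$ messages overall, as claimed. The main delicate step is the correctness analysis of the random-hashing primitive: one must argue that, regardless of how many outgoing edges exist, the XOR lands on $1$ with constant probability rather than systematically cancelling, which is exactly where pairwise independence of the hash family is used; everything else is bookkeeping on top of the tree protocol.
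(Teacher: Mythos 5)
Your overall architecture matches the paper's, which itself only sketches this result and defers to King et al.~\cite{KingKT15}: build a BFS tree $T$ of $H$, run a broadcast-and-echo in which each node XORs hash values of \emph{all} its incident edges in $G$ so that internal edges cancel and the root obtains $\bigoplus_{e\ \text{leaving}\ H} h(e)$, then amplify by $O(\log n)$ repetitions and binary-search on edge identifiers. The cost accounting is also correct.

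However, the step you yourself identify as the delicate one is where your argument breaks: pairwise independence of $h$ does \emph{not} imply that the XOR over a nonempty set $S$ of outgoing edges equals $1$ with constant probability. Pairwise independence only controls the joint distribution of two values; for $|S|\ge 3$ the XOR can be constant. Concretely, the standard pairwise-independent family $h(e)=\langle a,\mathrm{id}(e)\rangle\oplus b$ (with $a$ a random vector and $b$ a random bit) satisfies $\bigoplus_{e\in S}h(e)=\langle a,\bigoplus_{e\in S}\mathrm{id}(e)\rangle$ for any even-sized $S$, which is identically $0$ whenever the edge identifiers in $S$ XOR to zero --- so TestOut would report ``no outgoing edge'' with probability $1$ on such cuts, and no amount of amplification repairs a one-sided test whose success probability is $0$. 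The property you actually need is that for \emph{every} nonempty $S$, $\Pr[\bigoplus_{e\in S}h(e)=1]\ge (1-\epsilon)/2$; this is precisely the definition of an $\epsilon$-biased (small-bias) sample space, which still admits $O(\log n)$-bit seeds and is (a variant of) what King et al.\ use. Replacing ``pairwise-independent family'' by ``$\epsilon$-biased family'' (or by fully independent bits per edge, at the cost of more shared randomness) fixes the proof; everything else in your write-up is sound bookkeeping consistent with the paper's treatment.
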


\begin{theorem} \label{thm:findMin}
  Consider a connected subgraph $H$ of a weighted graph $G$ with edge weights from $\{ 1, \dots, \poly(n) \}$.
  There is an algorithm \emph{FindMin} that outputs the identifier of a lightest edge in $G$ leaving $H$ if there is such an edge and that outputs $\emptyset$ otherwise, w.h.p.
  The running time of the algorithm is $\tilde{O}(\diam(H))$ and it uses $\tilde{O}(|E(H)|)$ messages.
\end{theorem}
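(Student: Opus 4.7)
The plan is to bootstrap FindMin directly on top of the FindAny primitive of Theorem~\ref{thm:findAny} by running a binary search over the range of edge weights. Since weights lie in $\{1, \dots, \poly(n)\}$, only $O(\log n)$ search levels are required, and the extra logarithmic factors can be absorbed into the $\tilde{O}(\cdot)$ notation.

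First, the initiator $u \in V(H)$ constructs a BFS tree $T$ of $H$ rooted at itself in $O(\diam(H))$ rounds and $O(|E(H)|)$ messages; $T$ serves as the communication backbone for all subsequent broadcast-and-echo operations, with $\diam(T) = O(\diam(H))$. Next, I would generalize the basic TestOut subroutine to a parameterized version, call it TestOut${}_{\le t}$, in which each node $v$ folds only those incident edges of weight at most $t$ into its local hash bit. The aggregate parity, collected by echoing up $T$, is nonzero with constant probability iff the cut $\bigl(V(H), V(G)\setminus V(H)\bigr)$ contains at least one edge of weight $\le t$: the same linear-algebraic cancellation argument as in plain TestOut applies, because an internal edge of weight $\le t$ still contributes the same bit to both endpoints (so it cancels), while each outgoing edge of weight $\le t$ contributes to exactly one endpoint (so it survives). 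Amplifying with $O(\log n)$ independent hash functions boosts correctness to w.h.p. I would then binary-search over $t \in \{1,\dots,\poly(n)\}$ using TestOut${}_{\le t}$ to locate the smallest weight $w^\star$ for which a leaving edge of weight $\le w^\star$ exists; if even $t=\poly(n)$ fails, the algorithm outputs $\emptyset$. Once $w^\star$ is fixed, a single call to FindAny, restricted so that every node ignores incident edges of weight $\ne w^\star$, recovers the identifier of a lightest leaving edge.

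For the complexity, the binary search contributes $O(\log n)$ levels times $O(\log n)$ amplification trials, each of which is one broadcast-and-echo along $T$ costing $O(\diam(H))$ rounds and $O(|V(T)|)$ messages; adding the initial BFS construction and the final restricted FindAny call, the totals are $\tilde{O}(\diam(H))$ rounds and $\tilde{O}(|E(H)|)$ messages, as claimed. The main obstacle I anticipate is justifying that weight-filtering inside the hash step preserves the probabilistic guarantees of TestOut. This reduces to re-running the original analysis on the sub-collection of weight-$\le t$ edges, which still induces a well-defined cut-parity random variable, so the argument carries over verbatim; a union bound over the $O(\log^2 n)$ invocations then keeps the overall failure probability at $1/n^{\Omega(1)}$.
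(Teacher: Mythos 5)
Your proposal is correct and follows essentially the same route the paper takes: the paper obtains FindMin by running King et al.'s TestOut on a BFS tree of $H$, amplifying its constant success probability with $O(\log n)$ repetitions, and layering binary searches over edge weights and then edge identifiers, exactly as you describe with your TestOut${}_{\le t}$ predicate. The only cosmetic difference is that you package the final identifier recovery as a weight-restricted call to FindAny rather than as an explicit further binary search on identifiers, which is the same mechanism.
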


We also require an efficient leader election algorithm.
The following theorem is a reformulation of Corollary 4.2 in~\cite{JACM15}.

\begin{theorem} \label{thm:LE}
  There is an algorithm that for any graph $G$ elects a leader in $O(\diam(G))$ rounds while using $\tilde{O}(|E(G)|)$ messages, w.h.p.
\end{theorem}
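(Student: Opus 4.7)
The plan is to adapt the singularly-optimal leader-election algorithm of Kutten et al.~\cite{JACM15} for the \KTZero{} model; since \KTOne{} only enriches the initial knowledge of nodes, the same algorithm works in \KTOne{} as well. The algorithm has two phases: a probabilistic candidate reduction followed by an identifier-competitive flood, together with a convergecast to confirm the winner.

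In Phase~1, each node independently becomes a \emph{candidate} with probability $p = c \log n / n$ for a suitable constant $c > 0$. A standard Chernoff bound shows that, w.h.p., the candidate set $C$ satisfies $|C| = \Theta(\log n)$ and $C \neq \emptyset$. In Phase~2 all candidates simultaneously initiate a BFS-style broadcast of their own identifiers, propagating one hop per round. Each node $u$ maintains a variable $M(u)$ initialized to $-\infty$. Upon receiving an identifier $x$ from a neighbor $v$: if $x \le M(u)$, the message is discarded; otherwise $u$ sets $M(u) := x$, records $v$ as its parent in the BFS tree rooted at $x$, and forwards $x$ on all other incident edges. After $O(\diam(G))$ rounds, every node has $M(u) = \max C$, so the unique holder of $\max C$ is the unambiguous leader; it then runs a convergecast along its BFS tree to confirm this fact in another $O(\diam(G))$ rounds.

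For the message analysis, observe that an edge $\{u, v\}$ carries a message in Phase~2 only when $u$ (or $v$) receives an identifier larger than the current $M(\cdot)$, i.e., a new running maximum. Since $|C| = O(\log n)$ w.h.p., the number of such events at any node is at most $|C| = O(\log n)$, so each edge carries at most $O(\log n)$ messages in each direction, giving a total of $O(|E(G)| \log n) = \tilde{O}(|E(G)|)$ messages in Phase~2. Phase~1 uses no messages, and the confirmation convergecast contributes only $O(n)$ additional messages along the winning BFS tree.

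The main obstacle is termination without a priori knowledge of $\diam(G)$: this is handled by first having the eventual leader learn its BFS-tree depth via the convergecast, then broadcast a termination signal once that depth has elapsed, which costs only an additional $O(\diam(G))$ rounds and $\tilde{O}(n)$ messages. A secondary issue is that a candidate must not declare itself leader prematurely; this is resolved by requiring the convergecast to verify that every node has $M(\cdot)$ equal to the candidate's identifier before termination is triggered. Combining all phases yields the desired $O(\diam(G))$ time and $\tilde{O}(|E(G)|)$ message complexity w.h.p.
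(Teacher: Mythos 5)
The paper does not prove this theorem at all: it is stated as a direct reformulation of Corollary~4.2 of Kutten et al.~\cite{JACM15}, and your proposal is essentially a faithful reconstruction of that cited algorithm (random candidate selection yielding $O(\log n)$ candidates, flooding with suppression of non-maximal identifiers, giving $O(\log n)$ forwards per edge and hence $\tilde{O}(|E(G)|)$ messages in $O(\diam(G))$ time). The one soft spot is your termination argument, which is slightly circular as written --- the convergecast that is supposed to tell the leader its BFS depth itself needs a termination/leaf-detection mechanism, and a node's parent pointer can change while larger identifiers are still arriving --- but this is repaired by standard acknowledgement-based termination detection for diffusing computations (each forwarded message is acknowledged once discarded or once all downstream acknowledgements arrive, and a candidate declares victory only if its flood covered all $n$ nodes and it never saw a larger identifier), which preserves the $O(\diam(G))$ time and $\tilde{O}(|E(G)|)$ message bounds.
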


\section{Distributed Danner Construction} \label{sec:dannerConstruction}

The distributed danner construction presented in Algorithm~\ref{alg:general} uses a parameter $\delta$ that controls a trade-off between the time and message complexity of the algorithm.
At the same time the parameter controls a trade-off between the diameter and the size (i.e., the number of edges) of the resulting danner.
For Algorithm~\ref{alg:general} we assume that $\delta \in [0, 1)$.
We explicitly treat the case $\delta = 1$ later on.
We say a node $u$ has \emph{high degree} if $\deg(u) \ge n^\delta$.
Otherwise, it has \emph{low degree}.
Let $\VHigh$ and $\VLow$ be the set of high-degree and low-degree nodes, respectively.

\begin{algorithm}
  The algorithm constructs a danner $H$.
  Initially, we set $V(H) \gets V(G)$ and $E(H) \gets \emptyset$.
  \begin{enumerate}
    \item \label{stp:centerSelection}
    Each node becomes a \emph{center} with probability $p = \min \{ (c \log n) / n^{\delta}, 1 \}$, where $c \ge 1$ is a constant determined in the analysis.
    Let $C$ be the set of centers.
    \item \label{stp:centerConnection}
    Each node $v$  adds the edges leading to its $\min\{\deg(v), n^\delta\}$ neighbors with the lowest identifiers to $H$.
    \item \label{stp:lowDegreeBroadcast}
    Each low-degree node sends a message to all its neighbors to inform them about its low degree and whether it is a center.
    The remaining steps operate on the induced subgraphs $\hat G \gets G[\VHigh \cup C]$ and $\hat H \gets H[\VHigh \cup C]$.
    Note that every node can deduce which of its neighbors lie in $\VHigh \cup C$ from the messages sent by the low-degree nodes.
    \item \label{stp:KKT}
    For $i = 1$ to $\log n$ do the following in parallel in each connected component $K$ of $\hat H$.
    \begin{enumerate}
      \item \label{stp:LE}
      Elect a leader using the algorithm from Theorem~\ref{thm:LE}.
      \item \label{stp:findAny}
      Use the algorithm FindAny from Theorem~\ref{thm:findAny} to find an edge in $\hat G$ leaving $K$.
      If such an edge exists, add it to $H$ and $\hat H$.
      \item \label{stp:wait}
      Wait until $T$ rounds have passed in this iteration before starting the next iteration in order to synchronize the execution between the connected components.
      The value of $T$ is determined in the analysis.
    \end{enumerate}
  \end{enumerate}
  \caption{Distributed Danner Construction}
  \label{alg:general}
\end{algorithm}

We now turn to the analysis of Algorithm~\ref{alg:general}.
We assume that the probability $p$ defined in Step~\ref{stp:centerSelection} is such that $p < 1$ since for $p = 1$ the analysis becomes trivial.
Our first goal is to bound the diameter of any connected component $K$ of $\hat H$ (defined in Step~\ref{stp:lowDegreeBroadcast} of Algorithm~\ref{alg:general}) during any iteration of the loop in Step~\ref{stp:KKT}.
To achieve this goal, we first show two fundamental lemmas that allow us to bound the diameter of $K$ in terms of its domination number $\gamma(K)$ (see Section~\ref{sec:definitions}).
The main observation behind Lemma~\ref{lem:shortestPathNeighbors} was also used by Feige et al.\ in~\cite{FeigePRU90}.

\begin{lemma} \label{lem:shortestPathNeighbors}
  Let $P$ be a shortest path in a graph $G$.
  For each node $v \in V(G)$ it holds $|\Gamma(v) \cap V(P)| \le 3$.
\end{lemma}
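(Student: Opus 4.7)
The plan is to argue by contradiction: suppose some vertex $v$ has four or more closed neighbors on $P$, and derive a violation of the defining property of a shortest path, namely that every subpath of a shortest path is itself a shortest path between its endpoints.

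First, I would pick four distinct vertices $w_1, w_2, w_3, w_4 \in \Gamma(v) \cap V(P)$ and index them so that they appear along $P$ in that order. Writing $P = (u_0, u_1, \dots, u_\ell)$ and letting $i_j$ denote the position of $w_j$ on $P$, I have $i_1 < i_2 < i_3 < i_4$, and therefore the subpath of $P$ from $w_1$ to $w_4$ has length $i_4 - i_1 \ge 3$. Since subpaths of shortest paths are shortest, this forces $d_G(w_1, w_4) \ge 3$.

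On the other hand, because each $w_j$ lies in the closed neighborhood $\Gamma(v)$, we have $d_G(w_j, v) \le 1$ (note this handles uniformly both the case $v = w_j$ and the case where $w_j$ is a proper neighbor of $v$). By the triangle inequality $d_G(w_1, w_4) \le d_G(w_1, v) + d_G(v, w_4) \le 2$, contradicting the lower bound from the previous paragraph. Hence no such $v$ can exist, and $|\Gamma(v) \cap V(P)| \le 3$ for every vertex $v$.

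The argument is essentially a one-line pigeonhole-plus-triangle-inequality observation, so I do not expect a genuine obstacle. The only subtlety worth stating explicitly in the write-up is that the bound $d_G(w_1, w_4) \ge 3$ uses the fact that in a shortest path the graph distance between two vertices equals their index separation along the path; formulating this cleanly (e.g., as a preliminary remark that subpaths of shortest paths are shortest) keeps the proof tidy and avoids splitting into cases according to whether $v$ itself lies on $P$.
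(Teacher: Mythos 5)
Your proof is correct and follows essentially the same approach as the paper: both derive a contradiction from the fact that the first and last of four closed neighbors of $v$ on $P$ must be at least $3$ apart along $P$ yet reachable from one another in at most $2$ hops via $v$. The only difference is presentational --- by phrasing the shortcut through the triangle inequality on $d_G(\cdot, v) \le 1$, you avoid the paper's explicit case split on whether $v$ itself is one of the path vertices, which is a slightly tidier write-up of the same idea.
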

\begin{proof}
  We show the lemma by contradiction.
  Let $P = (u_0, \dots, u_\ell)$ be a shortest path in $G$.
  Suppose there is a node $v \in V(G)$ such that $|\Gamma(v) \cap V(P)| \ge 4$.
  Let $u_i$ be the node in $\Gamma(v) \cap V(P)$ with the lowest index in $P$ and let $u_j$ be the node in $\Gamma(v) \cap V(P)$ with the highest index in $P$.
  Since $|\Gamma(v) \cap V(P)| \ge 4$ at least two nodes lie between $u_i$ and $u_j$ in $P$.
  We distinguish two cases.
  If $u_i = v$ or $u_j = v$ then $P' = (u_0, \dots, u_i, u_j, \dots, u_\ell)$ is a path in $G$ such that $|P'| \le |P| - 2$, which is a contradiction.
  Otherwise, the path $P' = (u_0, \dots, u_i, v, u_j \dots, u_\ell)$ is a path in $G$ such that $|P'| \le |P| - 1$, which is again a contradiction.
\end{proof}

\begin{lemma} \label{lem:diameterDominationNumberBound}
  For a connected graph $G$ it holds $\diam(G) < 3 \gamma(G)$.
\end{lemma}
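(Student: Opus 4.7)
The plan is to combine Lemma~\ref{lem:shortestPathNeighbors} with a counting argument over a minimum dominating set. Let $P = (u_0, \dots, u_\ell)$ be a shortest path in $G$ that realizes the diameter, so $\ell = \diam(G)$, and let $D \subseteq V(G)$ be a minimum dominating set, so $|D| = \gamma(G)$.

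The key observation is that every node on $P$ must be dominated by some node of $D$, i.e., must lie in the closed neighborhood of some $v \in D$. Formally, since $\Gamma(D) = V(G) \supseteq V(P)$, we have
\[
  V(P) \;=\; V(P) \cap \Gamma(D) \;=\; \bigcup_{v \in D} \bigl(V(P) \cap \Gamma(v)\bigr).
\]
Taking cardinalities and applying Lemma~\ref{lem:shortestPathNeighbors} to each term yields
\[
  |V(P)| \;\le\; \sum_{v \in D} |V(P) \cap \Gamma(v)| \;\le\; 3|D| \;=\; 3\gamma(G).
\]
Since $|V(P)| = \ell + 1 = \diam(G) + 1$, this gives $\diam(G) + 1 \le 3\gamma(G)$, and hence $\diam(G) < 3\gamma(G)$, as desired.

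I do not anticipate any real obstacle: the heart of the argument is the ``no shortcut'' bound from Lemma~\ref{lem:shortestPathNeighbors}, which has already been established, and the rest is just the observation that a dominating set must cover every vertex of the diametral path. The only tiny care is to make sure the inequality is strict; this falls out automatically because $|V(P)|$ counts $\diam(G)+1$ vertices rather than $\diam(G)$ edges.
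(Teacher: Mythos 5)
Your proof is correct and uses essentially the same argument as the paper: bound the number of path vertices each dominating-set member can cover via Lemma~\ref{lem:shortestPathNeighbors}, then count. The paper phrases this as a proof by contradiction using the pigeonhole principle, while you argue directly by summing $|V(P)\cap\Gamma(v)|$ over the dominating set, but the substance is identical and your handling of the strict inequality (via $|V(P)|=\diam(G)+1$) is sound.
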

\begin{proof}
  We show the lemma by contradiction.
  Suppose there is a shortest path $P$ in $G$ such that $|P| \ge 3 \gamma(G)$.
  Let $S$ be a dominating set in $G$ with $|S| = \gamma(G)$.
  By definition, for each node $u \in V(P)$ there is a node $v \in S$ such that $u \in \Gamma(v)$.
  Since $|V(P)| = |P| + 1 > 3 |S|$, the pigeonhole principle implies that there must be a node $v \in S$ such that $|\Gamma(v) \cap V(P)| > 3$.
  By Lemma~\ref{lem:shortestPathNeighbors}, this implies that $P$ is not a shortest path, which is a contradiction.
\end{proof}

With these lemmas in place, we can now turn to the problem of bounding the diameter of a connected component $K$ of $\hat H$.
We first bound the number of centers established in Step~\ref{stp:centerSelection}.

\begin{lemma} \label{lem:numCenters}
  It holds $|C| = \tilde{O}(n^{1 - \delta})$, w.h.p.
\end{lemma}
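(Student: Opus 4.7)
The plan is to apply a standard Chernoff bound to the binomial random variable $|C|$. Since each of the $n$ nodes is independently chosen as a center with probability $p = (c \log n) / n^{\delta}$ (recall we are in the case $p < 1$), the random variable $|C|$ is the sum of $n$ independent indicators, so $\mathbb{E}[|C|] = np = c n^{1-\delta} \log n$.

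First I would record the expectation $\mu := np = c n^{1-\delta} \log n$. Then I would invoke a multiplicative Chernoff bound of the form $\Pr[|C| \ge (1+\varepsilon)\mu] \le \exp(-\varepsilon^{2} \mu / 3)$ with, say, $\varepsilon = 1$. This gives $\Pr[|C| \ge 2c n^{1-\delta} \log n] \le \exp(-c n^{1-\delta} \log n / 3)$. Because $n^{1-\delta} \ge 1$ (as $\delta < 1$ in the regime we are analyzing), this probability is at most $n^{-c/3}$, which is $1/n^{\Omega(1)}$ once the constant $c$ from Step~\ref{stp:centerSelection} is chosen large enough. Hence $|C| \le 2 c n^{1-\delta} \log n = \tilde{O}(n^{1-\delta})$ with high probability.

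The argument is essentially routine; the only mild subtlety is making sure the exponent $\varepsilon^{2}\mu/3$ is at least logarithmic in $n$ so that the failure probability is genuinely polynomially small. This is guaranteed precisely because $\mu = c n^{1-\delta} \log n \ge c \log n$ for all $\delta \in [0,1)$, so the constant $c$ appearing in the definition of $p$ can be tuned to boost the success probability to $1 - 1/n^{c'}$ for any desired constant $c'$. No other obstacles arise.
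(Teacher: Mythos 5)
Your proof is correct and matches the paper's argument essentially verbatim: both compute $\E[\,|C|\,] = c n^{1-\delta}\log n$ by linearity over independent indicators and apply a multiplicative Chernoff bound with deviation factor $2$ to get failure probability $\exp(-c n^{1-\delta}\log n/3)$, then choose $c$ large. Your added remark that $n^{1-\delta}\ge 1$ guarantees the exponent is at least logarithmic is a nice explicit touch, but the approach is the same.
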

\begin{proof}
  Let $X_u$ be a binary random variable such that $X_u = 1$ if and only if $u \in C$.
  We have $\E[X_u] = (c \log n) / n^{\delta}$.
  By definition it holds $|C| = \sum_{u \in V} X_u$.
  The linearity of expectation implies $\E[\, |C| \,] = \sum_{u \in V} E[X_u] = c n^{1 - \delta} \log n$.
  Since $|C|$ is a sum of independent binary random variables we can apply Chernoff bounds (see, e.g.,~\cite{MitzenmacherUpfal}) to get $\Pr[\, |C| \ge 2 c n^{1 - \delta} \log n \, ] \le \exp(-c n^{1 - \delta} \log n / 3)$.
  The lemma follows by choosing $c$ sufficiently large.
\end{proof}

The next two lemmas show that the set of centers in $K$ forms a dominating set of $K$.

\begin{lemma} \label{lem:adjacent}
  After Step~\ref{stp:centerConnection} each high-degree node is adjacent to a center in $\hat H$, w.h.p.
\end{lemma}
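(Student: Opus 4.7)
The plan is straightforward: fix a high-degree node $v$ and argue that, with high probability, at least one of the $n^\delta$ neighbors of $v$ with lowest identifiers is a center, which by Step~\ref{stp:centerConnection} gives an edge from $v$ to a center in $H$; since $v \in \VHigh$ and the center lies in $C$, that edge survives in $\hat H = H[\VHigh \cup C]$. A union bound over all high-degree nodes will then yield the w.h.p.\ statement.

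First I would isolate the independence that makes the argument work. The set $N^\star(v)$ of the $n^\delta$ neighbors of $v$ with smallest identifiers is determined purely by the input graph and the identifier assignment, and in particular is independent of the private coins used in Step~\ref{stp:centerSelection}. Hence the events $\{u \in C\}$ for $u \in N^\star(v)$ are mutually independent Bernoulli trials with success probability $p = (c\log n)/n^\delta$ (we are in the case $p<1$). This independence is the reason the algorithm specifies ``lowest identifiers'' rather than, say, an arbitrary choice that might depend on the sampling.

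Next I would compute the failure probability for a single $v \in \VHigh$. Since $|N^\star(v)| = n^\delta$, the probability that none of these neighbors is a center is
\[
(1-p)^{n^\delta} \;\le\; \exp\!\left(-p\,n^\delta\right) \;=\; \exp(-c\log n) \;=\; n^{-c}.
\]
If at least one such neighbor $u$ is a center, then by Step~\ref{stp:centerConnection} the edge $\{v,u\}$ is added to $H$; because $v \in \VHigh$ and $u \in C$, both endpoints belong to $\VHigh \cup C$, so $\{v,u\} \in E(\hat H)$ and $v$ is adjacent to a center in $\hat H$.

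Finally, I would union-bound over all high-degree nodes: the probability that some $v \in \VHigh$ fails to be adjacent to a center is at most $|\VHigh|\cdot n^{-c} \le n^{1-c}$, which is $1/n^{c-1}$. Choosing the constant $c$ in Step~\ref{stp:centerSelection} sufficiently large (exactly as in Lemma~\ref{lem:numCenters}) makes this polynomially small, completing the proof. I do not foresee any real obstacle here; the only subtle point worth stating explicitly is the independence between the deterministic ``lowest identifier'' choice and the random center sampling, since that is precisely the design decision that the lemma is exploiting.
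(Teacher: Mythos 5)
Your proof is correct and follows essentially the same route as the paper's: bound the probability that none of the $n^\delta$ lowest-identifier neighbors of a high-degree node is sampled as a center by $(1-p)^{n^\delta} \le e^{-c\log n}$, then union-bound over all nodes and take $c$ large. Your explicit remarks on the independence of the ``lowest identifier'' choice from the sampling and on why the edge survives in $\hat H$ are details the paper leaves implicit, but they do not change the argument.
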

\begin{proof}
  Consider a node $u \in \VHigh$.
  Let $S$ be the set of the $n^\delta$ neighbors of $u$ with lowest identifier.
  Each node in $S$ is a center with probability $p$.
  Hence, the probability that no node in $S$ is a center is
  $
    \left(1 - p \right)^{|S|}
    = \left(1 - (c \log n) / n^{\delta} \right)^{n^\delta}
    \le \exp(-c \log n)
  $.
  The lemma follows by applying the union bound over all nodes and choosing the constant $c$ sufficiently large.
\end{proof}

\begin{lemma} \label{lem:componentDominatingSet}
  Let $K$ be a connected component of $\hat H$ before any iteration of the loop in Step~\ref{stp:KKT} or after the final iteration.
  The set of centers in $K$ is a dominating set of $K$, w.h.p.
\end{lemma}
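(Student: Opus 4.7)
The plan is to argue by a simple case split on whether a node in $K$ is a center, combined with Lemma~\ref{lem:adjacent}, which already furnishes the only nontrivial ingredient. Since $\hat H = H[\VHigh \cup C]$ by construction, every node of $K$ lies in $\VHigh \cup C$, so a node $u \in V(K)$ is either a center (in which case it trivially dominates itself) or it is a high-degree non-center, which is the case to treat.

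So fix $u \in V(K) \cap (\VHigh \setminus C)$. By Lemma~\ref{lem:adjacent}, immediately after Step~\ref{stp:centerConnection} there exists a center $v \in C$ such that the edge $\{u,v\}$ is present in $\hat H$, and this holds w.h.p.\ simultaneously for all high-degree nodes. I would then observe that throughout Step~\ref{stp:KKT} the algorithm only ever \emph{adds} edges to $H$ (via the FindAny operation in Step~\ref{stp:findAny}) and never removes them, so the edge $\{u,v\}$ remains in $\hat H$ both before any iteration of the loop and after the final iteration. Since $u$ and $v$ are adjacent in $\hat H$ and $u \in V(K)$, the node $v$ must belong to the same connected component $K$. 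Hence $v$ is a center of $K$ dominating $u$.

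Combining the two cases shows that every node of $K$ is either itself a center in $K$ or has a neighbor in $K$ that is a center, i.e., the set of centers in $K$ is a dominating set of $K$. The high-probability qualification is inherited from the single invocation of Lemma~\ref{lem:adjacent}, so no additional union bound over iterations is needed.

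The only mild subtlety, and essentially the whole content of the argument, is the monotonicity observation that FindAny-additions cannot destroy existing incidences, so the ``center neighbor'' witness produced at the end of Step~\ref{stp:centerConnection} persists through the loop; everything else is definitional.
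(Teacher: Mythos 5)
Your proof is correct and follows essentially the same argument as the paper: split on whether a node of $K$ is a center, and for high-degree non-centers invoke Lemma~\ref{lem:adjacent} to produce a center neighbor in $\hat H$, which then lies in $K$. Your explicit monotonicity remark (that Step~\ref{stp:KKT} only adds edges, so the adjacency from Step~\ref{stp:centerConnection} persists) is a small clarification the paper leaves implicit, but it does not change the approach.
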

\begin{proof}
  Recall that $V(K) \subseteq \VHigh \cup C$ by definition.
  Hence, each node $u \in V(K)$ is a center or has high degree.
  If $u$ is a center, there is nothing to show.
  If $u$ is not a center, it must be of high degree.
  According to Lemma~\ref{lem:adjacent}, $u$ is connected to a center $v$ in $\hat H$.
  This implies that $v \in V(K)$ and $\{u, v\} \in E(K)$.
\end{proof}

By combining the statements of Lemmas~\ref{lem:diameterDominationNumberBound},~\ref{lem:numCenters}~and~\ref{lem:componentDominatingSet}, we get the following lemma.

\begin{lemma} \label{lem:sumComponentDiameters}
  Let $K_1, \dots, K_r$ be the connected components of $\hat H$ before any iteration of the loop in Step~\ref{stp:KKT} or after the final iteration.
  It holds $\sum_{i = 1}^r \diam(K_i) = \tilde{O}(n^{1 - \delta})$, w.h.p.
\end{lemma}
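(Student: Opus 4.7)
The plan is to reduce this to the three earlier lemmas: express each $\diam(K_i)$ in terms of the centers lying in $K_i$, and then sum over components.

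First, I would fix the connected components $K_1,\dots,K_r$ of $\hat H$ at one of the allowed time points and, for each $i$, let $C_i = C \cap V(K_i)$ denote the set of centers that fall inside $K_i$. Since $V(K_i) \subseteq \VHigh \cup C$ and each $K_i$ is connected, Lemma~\ref{lem:componentDominatingSet} applies at this time point and says that $C_i$ is a dominating set of $K_i$ w.h.p. In particular, $\gamma(K_i) \le |C_i|$. Applying Lemma~\ref{lem:diameterDominationNumberBound} to each (connected) component then yields
\[
  \diam(K_i) < 3\gamma(K_i) \le 3|C_i|.
\]
Note this bound is also valid for trivial components (a single center with no neighbors in $\hat H$), which have diameter $0$.

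Next I would sum over the components. Because the $K_i$ partition $V(\hat H)$, the sets $C_i$ are pairwise disjoint, so $\sum_{i=1}^r |C_i| \le |C|$. Combining with the per-component bound,
\[
  \sum_{i=1}^r \diam(K_i) \;<\; 3 \sum_{i=1}^r |C_i| \;\le\; 3|C|.
\]
Finally, Lemma~\ref{lem:numCenters} gives $|C| = \tilde{O}(n^{1-\delta})$ w.h.p., which immediately yields $\sum_i \diam(K_i) = \tilde{O}(n^{1-\delta})$ w.h.p.

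There is no real technical obstacle since the heavy lifting is already done in the preceding lemmas; the only thing to be a bit careful about is the union bound. The conclusion depends on two high-probability events: the bound on $|C|$ from Lemma~\ref{lem:numCenters}, and the fact that every high-degree node is adjacent to some center (used inside Lemma~\ref{lem:componentDominatingSet} via Lemma~\ref{lem:adjacent}). Both failure probabilities are $1/n^{\Omega(1)}$ for suitably large $c$, so a union bound keeps the overall success probability at $1 - 1/n^{\Omega(1)}$, as required. The only mild subtlety is that Lemma~\ref{lem:componentDominatingSet} is invoked at a specific time point (either before some iteration of the loop or after the final iteration); since Algorithm~\ref{alg:general} runs for only $O(\log n)$ iterations, a further union bound over these $O(\log n)$ time points preserves the high-probability guarantee.
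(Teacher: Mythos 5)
Your proof is correct and follows essentially the same route as the paper: apply Lemma~\ref{lem:componentDominatingSet} and Lemma~\ref{lem:diameterDominationNumberBound} per component to get $\diam(K_i) < 3|C \cap V(K_i)|$, then sum using the disjointness of the components and Lemma~\ref{lem:numCenters}. Your extra remarks on the trivial-component edge case and the union bound over the $O(\log n)$ time points are fine refinements that the paper leaves implicit.
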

\begin{proof}
  Let $C(K_i)$ be the set of centers in $K_i$.
  According to Lemma~\ref{lem:componentDominatingSet}, $C(K_i)$ is a dominating set of $K_i$.
  Therefore, Lemma~\ref{lem:diameterDominationNumberBound} implies $\diam(K_i) < 3 |C(K_i)|$.
  This implies $\sum_{i = 1}^r \diam(K_i) < 3 \sum_{i = 1}^r |C(K_i)| = 3 |C| = \tilde{O}(n^{1 - \delta})$, where the last equality holds according to Lemma~\ref{lem:numCenters}.
\end{proof}

The following simple corollary gives us the desired bound on the diameter of a connected component $K$ of $\hat H$.

\begin{corollary} \label{cor:componentDiameter}
  Let $K$ be a connected component of $\hat H$ before any iteration of the loop in Step~\ref{stp:KKT} or after the final iteration.
  It holds $\diam(K) = \tilde{O}(n^{1 - \delta})$, w.h.p.
\end{corollary}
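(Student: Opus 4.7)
The corollary is an immediate consequence of Lemma~\ref{lem:sumComponentDiameters}. My plan is to observe that since all terms in the sum $\sum_{i=1}^r \diam(K_i)$ are nonnegative, each individual term is bounded above by the total. In particular, if $K = K_j$ for some index $j$, then
\[
  \diam(K) = \diam(K_j) \le \sum_{i = 1}^{r} \diam(K_i) = \tilde{O}(n^{1 - \delta}),
\]
where the final equality holds with high probability by Lemma~\ref{lem:sumComponentDiameters}.

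There is essentially no obstacle here: the high-probability event has already been established in the lemma, and no union bound over the choice of component is required because we are applying a single high-probability bound to the entire sum. The only subtle point is to note that Lemma~\ref{lem:sumComponentDiameters} is stated specifically for the connected components of $\hat H$ either \emph{before} any iteration of the loop in Step~\ref{stp:KKT} or \emph{after} the final iteration; the corollary inherits exactly this same quantifier, so no additional argument is needed to handle intermediate iterations.
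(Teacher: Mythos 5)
Your argument is correct and is exactly the one the paper intends: the corollary is stated as a "simple corollary" of Lemma~\ref{lem:sumComponentDiameters} with no further proof, and bounding a single nonnegative term by the whole sum is precisely the intended deduction. Your remark that the quantifier over iterations is inherited directly from the lemma is also accurate.
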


On the basis of Corollary~\ref{cor:componentDiameter}, we can bound the value of $T$, the waiting time used in Step~\ref{stp:wait}:
Consider an iteration of the loop in Step~\ref{stp:KKT}.
For each connected component $K$ the leader election in Step~\ref{stp:LE} can be achieved in $\tilde{O}(n^{1 - \delta})$ rounds according to Theorem~\ref{thm:LE}.
The algorithm FindAny in Step~\ref{stp:findAny} requires $\tilde{O}(n^{1 - \delta})$ rounds according to Theorem~\ref{thm:findAny}.
Therefore, we can choose $T$ such that $T = \tilde{O}(n^{1 - \delta})$.

Our next objective is to show that the computed subgraph $H$ is an additive $\tilde{O}(n^{1 - \delta})$-danner.
To this end, we first take a closer look at the connected components of $\hat H$ after the algorithm terminates.

\begin{lemma} \label{lem:componentMaximality}
  After Algorithm~\ref{alg:general} terminates, the set of connected components of $\hat H$ equals the set of connected components of $\hat G$.
\end{lemma}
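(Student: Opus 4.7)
The plan is a standard Boruvka-style halving argument: since $\hat H$ and $\hat G$ share the same vertex set $\VHigh \cup C$ and $E(\hat H) \subseteq E(\hat G)$ at all times, each connected component of $\hat H$ is contained in some connected component of $\hat G$. What has to be shown is the reverse containment, i.e., that no connected component $C$ of $\hat G$ is split across two or more components of $\hat H$ at termination. I will prove this by tracking, for a fixed component $C$ of $\hat G$, the number $r_i$ of connected components of $\hat H$ that lie inside $C$ at the start of iteration $i$ of the loop in Step~\ref{stp:KKT}, and showing $r_{i+1} \le r_i/2$ whenever $r_i \ge 2$.

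First I would argue that whenever $r_i \ge 2$, every $\hat H$-component $K \subseteq C$ has at least one edge of $\hat G$ leaving it: because $C$ is connected in $\hat G$, for any $\hat H$-component $K$ properly contained in $V(C)$ there must be a $\hat G$-edge with exactly one endpoint in $K$ (otherwise $K$ would be separated from the rest of $C$). By Theorem~\ref{thm:findAny}, the call to FindAny issued by $K$ in Step~\ref{stp:findAny} then returns such an edge w.h.p., and that edge is added to $\hat H$. Consider the auxiliary multigraph $M$ on the $r_i$ $\hat H$-components in $C$ in which, for each such component $K$, we place an edge from $K$ to the component containing the other endpoint of the edge FindAny returned for $K$. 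Every vertex of $M$ has degree at least $1$, so every connected component of $M$ contains at least two vertices; hence merging the components of $\hat H$ along the returned edges reduces their count within $C$ by at least a factor of two, giving $r_{i+1} \le r_i / 2$.

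Since the initial value $r_1$ is at most $n$ and the loop runs for $\log n$ iterations, we obtain $r_{\log n + 1} = 1$ for every $\hat G$-component $C$, meaning that after termination each $\hat G$-component coincides with a single $\hat H$-component, which is exactly the claim of the lemma. Finally, a union bound over all calls to FindAny (at most $n \log n$ of them, each correct with probability $1 - n^{-c}$) upgrades the argument to hold with high probability for all components simultaneously. The only mildly delicate point is the halving step, since FindAny merely returns some leaving edge rather than one chosen to maximize merging; the auxiliary graph $M$ with minimum degree one is exactly what makes this suffice, and this is the standard Boruvka observation.
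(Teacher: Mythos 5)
Your proof is correct and is essentially the same argument as the paper's: both rely on the standard Boruvka observation that every $\hat H$-component strictly inside a $\hat G$-component finds a leaving edge via FindAny and therefore merges each iteration, so $\log n$ iterations suffice. The only cosmetic difference is the invariant tracked --- you halve the number of components within each $\hat G$-component (via the auxiliary multigraph of minimum degree one), while the paper dually doubles the minimum component size to $\min\{2^i, |V(K_{\hat G})|\}$; the two bookkeeping schemes are interchangeable here.
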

\begin{proof}
  Consider a connected component $K_{\hat G}$ of $\hat G$.
  We show by induction that after iteration $i$ of the loop in Step~\ref{stp:KKT}, each connected component of $\hat H[V(K_{\hat G})]$ has size at least $\min \{ 2^i, |V(K_{\hat G})| \}$.
  Since $|V(K_{\hat G})| \le n$ and the loop runs for $\log n$ iterations, this implies that after the algorithm terminates, only one connected components remains in $\hat H[V(K_{\hat G})]$.

  The statement clearly holds before the first iteration of the loop, i.e., for $i = 0$.
  Suppose that the statement holds for iteration $i \ge 0$.
  We show that it also holds for iteration $i + 1$.
  If there is only one connected component at the beginning of iteration $i + 1$ then that connected component must equal $K_{\hat G}$ so the statement holds.
  If there is more than one connected component at the beginning of iteration $i + 1$ then by the induction hypothesis each connected component has size at least $2^i$.
  Each connected component finds an edge leading to another connected component in Step~\ref{stp:findAny} and thereby merges with at least one other connected component.
  The size of the newly formed component is at least $\min \{ 2^{i + 1}, |K_{\hat G}| \}$.
\end{proof}

We are now ready to show that $H$ is an additive $\tilde{O}(n^{1 - \delta})$-danner.

\begin{lemma} \label{lem:dannerfinal}
  Algorithm~\ref{alg:general} computes an additive $\tilde{O}(n^{1 - \delta})$-danner $H$ of $G$, w.h.p.
\end{lemma}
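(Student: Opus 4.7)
The plan is to verify the two requirements separately: $V(H) = V(G)$ is immediate from the algorithm's initialization, so the heart of the argument is to show $\diam(H) \le \diam(G) + \tilde{O}(n^{1-\delta})$. I would fix arbitrary $u, v \in V(G)$ and a shortest $u$-$v$ path $P = (u_0, \dots, u_\ell)$ in $G$ (so $\ell \le \diam(G)$), and produce a $u$-$v$ walk in $H$ of length at most $\ell + \sum_K \diam(K)$, where $K$ ranges over the connected components of $\hat H$ after termination. Since Lemma~\ref{lem:sumComponentDiameters} already gives $\sum_K \diam(K) = \tilde{O}(n^{1-\delta})$ w.h.p., this would be exactly the bound we need.

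The walk would traverse $P$ index by index but ``short-circuit'' whenever it enters a component of $\hat H$. From the current node $u_j$ with $j < \ell$: if $u_j \in \VLow \setminus C$, then Step~\ref{stp:centerConnection} puts every incident edge of $u_j$ into $H$, so we simply traverse $\{u_j, u_{j+1}\}$; otherwise $u_j \in \VHigh \cup C$ lies in some component $K$ of $\hat H$, and we jump inside $K$ to $u_{j^*}$ with $j^* := \max\{j' \ge j : u_{j'} \in V(K)\}$ using at most $\diam(K)$ edges of $\hat H \subseteq H$, and then (if $j^* < \ell$) cross the edge $\{u_{j^*}, u_{j^*+1}\}$.

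Two short verifications justify this. The edge $\{u_{j^*}, u_{j^*+1}\}$ really is in $H$, because $u_{j^*+1}$ must lie in $\VLow \setminus C$: if it were in $\VHigh \cup C$, then $\{u_{j^*}, u_{j^*+1}\} \in E(\hat G)$ and Lemma~\ref{lem:componentMaximality} would place $u_{j^*+1}$ in the same component of $\hat H$ as $u_{j^*}$, contradicting the maximality of $j^*$; so $u_{j^*+1}$ is low-degree and Step~\ref{stp:centerConnection} applies again. Moreover, each component $K$ is touched by at most one component-step of the walk, since after such a step the walk resumes at index $j^* + 1$ and, by construction of $j^*$, no later index contains a node of $V(K)$. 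Every step advances the index by at least $1$ (so there are at most $\ell$ steps), a low-degree step costs one edge, and a component-step in $K$ costs at most $\diam(K) + 1$ edges; summing gives the target $\ell + \sum_K \diam(K) \le \diam(G) + \tilde{O}(n^{1-\delta})$.

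The main obstacle I would expect to have to handle carefully is interleaving: distinct maximal high-degree subpaths of $P$ can lie in the same component of $\hat H$, and a naive ``replace each such interval by its own detour'' bookkeeping would pay $\diam(K)$ once per visit and could blow up. Jumping directly to $u_{j^*}$, the last occurrence of $K$ along $P$, is precisely what collapses all visits to $K$ into a single $\diam(K)$ charge, which is what allows Lemma~\ref{lem:sumComponentDiameters} to close the bound.
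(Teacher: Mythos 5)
Your proof is correct and follows essentially the same approach as the paper's: walk along a shortest path of $G$, detour through the connected components of $\hat H$ (using Lemma~\ref{lem:componentMaximality} to justify that adjacent high-degree/center nodes share a component), and charge each component's diameter only once so that Lemma~\ref{lem:sumComponentDiameters} closes the bound. The only difference is presentational --- you do the shortcutting greedily in a single pass by jumping to the last occurrence of each component, whereas the paper first patches every missing edge and then shortcuts each component in a second pass; both yield $|P_H| \le |P_G| + \tilde{O}(n^{1-\delta})$.
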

\begin{proof}
  Let $P_G = (u_0, \dots, u_\ell)$ be a shortest path in $G$.
  We construct a path $P_H$ from $u_0$ to $u_\ell$ in $H$ such that $|P_H| \le |P_G| + \tilde{O}(n^{1 - \delta})$.
  Some of the edges in $P_G$ might be missing in $H$.
  Let $\{ u_i, u_{i + 1} \}$ be such an edge.
  Observe that if $u_i$ or $u_{i + 1}$ has low degree then the edge $\{ u_i, u_{i + 1} \}$ is contained in $H$ since a low degree node adds all of its incident edges to $H$ in Step~\ref{stp:centerConnection}.
  Hence, $u_i$ and $u_{i + 1}$ must have high degree.
  Since the nodes share an edge in $G$, they lie in the same connected component of $\hat G$.
  According to Lemma~\ref{lem:componentMaximality} this means that the nodes also lie in the same connected component of $\hat H$.
  Therefore, there is a path in $\hat H$ between $u_i$ and $u_{i + 1}$.
  We construct $P_H$ from $P_G$ by replacing each edge $\{ u_i, u_{i + 1} \}$ that is missing in $H$ by a shortest path from $u_i$ to $u_{i + 1}$ in $\hat H$.

  While $P_H$ is a valid path from $u_0$ to $u_\ell$ in $H$, its length does not necessarily adhere to the required bound.
  To decrease the length of $P_H$, we do the following for each connected component $K$ of $\hat H$:
  If $P_H$ contains at most one node from $K$, we proceed to the next connected component.
  Otherwise, let $v$ be the first node in $P_H$ that lies in $K$ and let $w$ be the last node in $P_H$ that lies in $K$.
  We replace the subpath from $v$ to $w$ in $P_H$ by a shortest path from $v$ to $w$ in $\hat H$.
  After iteratively applying this modification for each connected component, the path $P_H$ enters and leaves each connected component of $\hat H$ at most once and within each connected component $P_H$ only follows shortest paths.
  Hence, according to Lemma~\ref{lem:sumComponentDiameters}, the number of edges in $P_H$ passing through $\hat H$ is bounded by $\tilde{O}(n^{1 - \delta})$.
  The remaining edges in $P_H$ stem from $P_G$, so their number is bounded by $|P_G|$.
  In summary, we have $|P_H| \le |P_G| + \tilde{O}(n^{1 - \delta})$.
\end{proof}

To complete our investigation we analyze the time and message complexity of Algorithm~\ref{alg:general} and bound the number of edges in the resulting danner $H$.

\begin{lemma} \label{lem:numEdges}
  The running time of Algorithm~\ref{alg:general} is $\tilde{O}(n^{1 - \delta})$ and the number of messages sent by the algorithm is $\tilde{O}(\min\{m,n^{1 + \delta}\})$.
  After the algorithm terminates it holds $|E(H)| = \tilde{O}(\min\{m,n^{1 + \delta}\})$.
\end{lemma}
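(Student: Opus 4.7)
The plan is straightforward bookkeeping on the steps of Algorithm~\ref{alg:general}, leveraging the structural results already established. I would first bound $|E(H)|$, then use that bound to control the message cost of the loop in Step~\ref{stp:KKT}, and finally verify the per-iteration round cost using Corollary~\ref{cor:componentDiameter}.

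For the edge count, Step~\ref{stp:centerConnection} contributes $\sum_v \min\{\deg(v), n^\delta\}$ edges; bounding the summand by $\deg(v)$ gives $2m$, and bounding it by $n^\delta$ gives $n \cdot n^\delta$, so the contribution is at most $\min\{2m, n^{1+\delta}\}$. Steps~\ref{stp:centerSelection} and~\ref{stp:lowDegreeBroadcast} add no edges. In Step~\ref{stp:KKT}, at most one edge per connected component of $\hat H$ is added per iteration, and every such edge strictly decreases the number of components of $\hat H$, so across all $\log n$ iterations at most $n-1$ further edges are added. Hence $|E(H)| = \tilde O(\min\{m, n^{1+\delta}\})$, and since $\hat H = H[\VHigh \cup C]$, the same bound holds for $|E(\hat H)|$ throughout the execution.

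For the messages, Step~\ref{stp:centerConnection} requires at most one message per included edge and therefore $\tilde O(\min\{m, n^{1+\delta}\})$ messages in total; Step~\ref{stp:lowDegreeBroadcast} requires $\sum_{v \in \VLow} \deg(v)$ messages, which is again at most $\min\{2m, n^{1+\delta}\}$ since $\deg(v) < n^\delta$ for every $v \in \VLow$. Within one iteration of Step~\ref{stp:KKT}, applying Theorems~\ref{thm:LE} and~\ref{thm:findAny} to each connected component $K$ of $\hat H$ and summing gives $\sum_K \tilde O(|E(K)|) = \tilde O(|E(\hat H)|) = \tilde O(\min\{m, n^{1+\delta}\})$, and the $\log n$ iterations preserve this bound. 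For the running time, Steps~\ref{stp:centerSelection}--\ref{stp:lowDegreeBroadcast} take $O(1)$ rounds, while in each iteration of Step~\ref{stp:KKT} Corollary~\ref{cor:componentDiameter} gives every component of $\hat H$ diameter $\tilde O(n^{1-\delta})$, so both the leader election of Theorem~\ref{thm:LE} and the FindAny of Theorem~\ref{thm:findAny} complete in $\tilde O(n^{1-\delta})$ rounds per component. This justifies choosing $T = \tilde O(n^{1-\delta})$ for the synchronizing wait in Step~\ref{stp:wait} and yields a total of $\tilde O(n^{1-\delta})$ rounds.

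The main subtlety I expect is not technical depth but careful threading: the diameter bound of Corollary~\ref{cor:componentDiameter} and the bound on $|E(\hat H)|$ must genuinely apply at every iteration of the loop, not only before the first and after the last. For the diameter bound, invoking the corollary at the start of each iteration suffices, because each intermediate state is captured by the corollary's ``before any iteration'' clause for the next iteration, and the wait of $T$ rounds in Step~\ref{stp:wait} synchronizes all components so this invocation is well-defined. For the edge-count bound, the observation that Step~\ref{stp:KKT} adds at most $n-1$ edges in total means $|E(\hat H)|$ remains dominated by its post-Step~\ref{stp:centerConnection} value throughout, so the per-iteration message cost stays within the claimed bound. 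Once these two points are pinned down, all three claims of the lemma follow by simple summation.
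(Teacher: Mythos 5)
Your proposal is correct and follows essentially the same route as the paper's proof: bound the edge contributions of Step~\ref{stp:centerConnection} and the merging loop separately, charge the loop's messages to $|E(\hat H)| \le |E(H)|$ via Theorems~\ref{thm:LE} and~\ref{thm:findAny}, and use Corollary~\ref{cor:componentDiameter} to set $T = \tilde{O}(n^{1-\delta})$. Your observation that each edge added in Step~\ref{stp:KKT} strictly decreases the component count (giving $n-1$ rather than the paper's cruder $\tilde{O}(n)$) is a minor sharpening, not a different argument.
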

\begin{proof}
  We begin with the running time.
  The first three steps of the algorithm can be executed in a single round.
  The loop in Step~\ref{stp:KKT} runs for $\log n$ iterations, each of which takes $T = \tilde{O}(n^{1 - \delta})$ rounds.

  Next we bound the number of edges in the danner.
  Steps~\ref{stp:centerSelection}~and~\ref{stp:lowDegreeBroadcast} do not add any edges to $H$.
  Step~\ref{stp:centerConnection} adds $\tilde{O}(\min\{m,n^{1 + \delta}\})$ edges to $H$.
  The loop in Step~\ref{stp:KKT} runs for $\log n$ iterations, and in every iteration each connected component of $\hat H$ adds at most one edge to $H$.
  Since the number of connected components is at most $n$ at all times, the total number of edges added in Step~\ref{stp:KKT} is $\tilde{O}(n)$.

  Finally, we turn to the message complexity of the algorithm.
  In Step~\ref{stp:centerSelection} the nodes send no messages.
  The number of messages sent in Step~\ref{stp:centerConnection} is $\tilde{O}(\min\{m,n^{1 + \delta}\})$.
  In Step~\ref{stp:lowDegreeBroadcast} each low-degree node sends a message to each of its neighbors.
  By definition a low-degree node has at most $n^\delta$ neighbors and there are at most $n$ low-degree nodes.
  Therefore, at most $\tilde{O}(\min\{m,n^{1 + \delta}\})$ messages are sent in this step.
  Each iteration of the loop in Step~\ref{stp:KKT} operates on a subgraph $\hat H$ of the final danner $H$.
  Consider a connected component $K$ of $\hat H$.
  Both the leader election in Step~\ref{stp:LE} and the algorithm FindAny in Step~\ref{stp:findAny} use $\tilde{O}(|E(K)|)$ messages according to Theorems~\ref{thm:LE}~and~\ref{thm:findAny}, respectively.
  Hence, the overall number of messages used in any iteration is $\tilde{O}(|E(\hat H)|)$ which is bounded by $\tilde{O}(|E(H)|) = \tilde{O}(\min\{m,n^{1 + \delta}\})$.
\end{proof}

Finally, we treat the special case $\delta = 1$.
In this case we do not use Algorithm~\ref{alg:general} but instead let each node add all its incident edges to $H$ such that $H = G$.
Combining the statements of the previous two lemmas together with the special case of $\delta = 1$ yields the following theorem.

\begin{theorem} \label{thm:general}
  There is an algorithm that for a connected graph $G$ and any $\delta \in [0, 1]$ computes an additive $\tilde{O}(n^{1 - \delta})$-danner $H$ consisting of $\tilde{O}(\min\{m,n^{1 + \delta}\})$ edges, w.h.p.
  The algorithm takes $\tilde{O}(n^{1 - \delta})$ rounds and requires $\tilde{O}(\min\{m,n^{1 + \delta}\})$ messages.
\end{theorem}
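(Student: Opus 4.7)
The plan is to prove Theorem~\ref{thm:general} by essentially combining the analyses already established for Algorithm~\ref{alg:general} in the two immediately preceding lemmas, together with a direct verification for the boundary case $\delta = 1$ that Algorithm~\ref{alg:general} does not cover.

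First I would handle the generic range $\delta \in [0, 1)$ where Algorithm~\ref{alg:general} is run as stated. Lemma~\ref{lem:dannerfinal} already guarantees, w.h.p., that the subgraph $H$ produced by the algorithm is an additive $\tilde{O}(n^{1 - \delta})$-danner of $G$, which directly gives the claimed diameter bound $\diam(H) \le \diam(G) + \tilde{O}(n^{1 - \delta})$. The size bound $|E(H)| = \tilde{O}(\min\{m, n^{1 + \delta}\})$, the $\tilde{O}(n^{1 - \delta})$ round bound, and the $\tilde{O}(\min\{m, n^{1 + \delta}\})$ message bound are all delivered verbatim by Lemma~\ref{lem:numEdges}. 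So for $\delta \in [0, 1)$ the proof reduces to a single sentence citing these two lemmas together with a union bound over the w.h.p.\ events they invoke (Lemmas~\ref{lem:numCenters}, \ref{lem:adjacent}, \ref{lem:componentDominatingSet}, Theorems~\ref{thm:findAny} and~\ref{thm:LE}), absorbing the polylogarithmic loss into the $\tilde{O}(\cdot)$.

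The remaining task is the special case $\delta = 1$, where Algorithm~\ref{alg:general} is not used and instead each node $v$ simply adds all its incident edges to $H$, producing $H = G$. I would verify each of the four claims in turn. For the danner property, $\diam(H) = \diam(G) = \diam(G) + 0$, and $0 = \tilde{O}(1) = \tilde{O}(n^{1 - 1})$, so $H$ is a (trivial) additive $\tilde{O}(n^{1 - \delta})$-danner. For the edge count, $|E(H)| = m$ and $m \le n^2 = n^{1 + \delta}$, so $|E(H)| = \min\{m, n^{1 + \delta}\}$. For time and messages, no communication is required: in the \KTOne{} model every node already knows the identifiers of its neighbors from its input, so each node can locally decide to include all incident edges in $H$; hence the algorithm uses $0$ rounds and $0$ messages, trivially within the stated $\tilde{O}(n^{1 - \delta}) = \tilde{O}(1)$ and $\tilde{O}(\min\{m, n^{1 + \delta}\})$ bounds.

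There is no real obstacle here; the substantive work was done in Lemmas~\ref{lem:dannerfinal} and~\ref{lem:numEdges}. The only thing to be careful about is that the two constructions (Algorithm~\ref{alg:general} for $\delta < 1$ and the trivial one for $\delta = 1$) are presented as a single algorithm parameterized by $\delta$, so the theorem statement covers the closed interval $[0, 1]$ uniformly; this just requires noting that for $\delta = 1$ the probability $p$ in Step~\ref{stp:centerSelection} would be $1$ anyway, so replacing the algorithm by the trivial one loses nothing, and the combined statement then follows.
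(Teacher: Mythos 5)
Your proof is correct and takes essentially the same route as the paper: Theorem~\ref{thm:general} is obtained precisely by combining Lemma~\ref{lem:dannerfinal} (danner property) with Lemma~\ref{lem:numEdges} (edge count, rounds, messages) for $\delta \in [0,1)$, plus the trivial construction $H = G$ for $\delta = 1$. One minor slip in a parenthetical remark: for $\delta = 1$ the probability $p = \min\{(c \log n)/n, 1\}$ is \emph{not} $1$ for large $n$, but this plays no role since you (like the paper) handle $\delta = 1$ via the explicit zero-communication construction rather than by running Algorithm~\ref{alg:general}.
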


\section{Applications} \label{sec:app}

In this section we demonstrate that the danner construction presented in Section~\ref{sec:dannerConstruction} can be used to establish trade-off results for many fundamental problems in distributed computing.

\subsection{Broadcast, Leader Election, and Spanning Tree} \label{sec:le}

On the basis of the danner construction presented in Section~\ref{sec:dannerConstruction} it is easy to obtain a set of trade-off results for broadcast, leader election, and spanning tree construction.
The number of messages required for a broadcast can be limited by first computing a danner and then broadcasting along the danner.
For leader election we can run the algorithm of Kutten et al.~\cite{JACM15} mentioned in Theorem~\ref{thm:LE} on the computed danner.
Finally, for spanning tree construction we can elect a leader which then performs a distributed breadth-first search on the danner to construct the spanning tree.
We have the following theorem.

\begin{theorem} \label{thm:simpleProblems}
  There are algorithms that for any connected graph $G$ and any $\delta \in [0, 1]$ solve the following problems in $\tilde{O}(D + n^{1 - \delta})$ rounds while using $\tilde{O}(\min\{m,n^{1 + \delta}\})$ messages, w.h.p.:
  broadcast, leader election, and spanning tree.
\end{theorem}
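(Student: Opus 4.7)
The plan is to reduce each of the three problems to a single computation performed on top of the danner $H$ produced by Theorem~\ref{thm:general}. Fix $\delta \in [0,1]$ and invoke that theorem to construct $H$; this costs $\tilde{O}(n^{1-\delta})$ rounds and $\tilde{O}(\min\{m,n^{1+\delta}\})$ messages, and yields a spanning subgraph with $|E(H)| = \tilde{O}(\min\{m,n^{1+\delta}\})$ and $\diam(H) = \tilde{O}(D+n^{1-\delta})$, w.h.p. Each node locally remembers which of its incident edges were placed in $H$ during Algorithm~\ref{alg:general}, so every subsequent phase can be confined to edges of $H$ without any additional coordination.

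For leader election, run the algorithm of Theorem~\ref{thm:LE} directly on $H$: it terminates in $O(\diam(H)) = \tilde{O}(D+n^{1-\delta})$ rounds while sending $\tilde{O}(|E(H)|) = \tilde{O}(\min\{m,n^{1+\delta}\})$ messages, which matches the target bounds after adding the danner construction cost. For the spanning tree, first elect a leader $\ell$ as above and then have $\ell$ initiate a distributed BFS restricted to edges of $H$; since $V(H)=V(G)$, the resulting tree spans $G$, and the BFS uses $O(\diam(H))$ rounds together with $O(|E(H)|)$ messages because each edge carries only a constant number of discovery/response tokens. For broadcast, we can either flood the message over $H$ (each recipient forwards once on each $H$-edge) or, equivalently, broadcast along a pre-computed spanning tree of $H$; either variant respects the same round and message bounds.

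The main thing to verify---and essentially the only obstacle---is that the subroutines continue to enjoy the KT$_1$ guarantee on the restricted graph $H$. This is automatic: the endpoints of any edge of $H$ already know one another's identifiers from the KT$_1$ input on $G$, and the degree/center bookkeeping done in Steps~\ref{stp:centerConnection}--\ref{stp:lowDegreeBroadcast} of Algorithm~\ref{alg:general} lets every node determine locally which of its neighbors appear in $H$. With that observation, summing the danner construction cost and the cost of the single post-processing phase gives the claimed $\tilde{O}(D+n^{1-\delta})$ time and $\tilde{O}(\min\{m,n^{1+\delta}\})$ message bounds simultaneously for all three problems.
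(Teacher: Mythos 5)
Your proposal is correct and matches the paper's own argument essentially verbatim: construct the danner via Theorem~\ref{thm:general}, run the leader election algorithm of Theorem~\ref{thm:LE} on it, obtain a spanning tree by a leader-initiated BFS restricted to the danner, and broadcast along the danner, with the stated round and message bounds following by summing the danner cost with a single $O(\diam(H))$-round, $\tilde{O}(|E(H)|)$-message phase. The extra observation about the KT$_1$ guarantee carrying over to $H$ is a fine (if not strictly necessary) sanity check.
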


\subsection{Minimum Spanning Tree and Connectivity} \label{sec:mst}

In this section we assume that we are given a \emph{weighted} connected graph $G$ with edge weights from $\{ 1, \dots, \poly(n) \}$.
Without loss of generality we assume that the edge weights are distinct such that the MST is unique.
We present a three step algorithm for computing the MST.

\textbf{Step 1:}
We compute a spanning tree of $G$ using the algorithm described in Section~\ref{sec:le} while ignoring the edge weights.
Recall that the algorithm computes the spanning tree by having a leader node initiate a distributed breadth-first search along a danner of diameter $\tilde{O}(D + n^{1 - \delta})$.
Therefore, the algorithm supplies us with a \emph{rooted} spanning tree $T$ of depth $\tilde{O}(D + n^{1 - \delta})$.
We aggregate the number of edges $m$ in $G$ using a convergecast along $T$.
If $m \leq n^{1+\delta}$, we execute the singularly optimal algorithm of Pandurangan et al.~\cite{Pandurangan0S17} on the original graph $G$ to compute the MST, which takes $\tilde{O}(D + \sqrt{n})$ rounds and requires $\tilde{O}(m)$ messages.
Otherwise, we proceed with the following steps.

\textbf{Step 2:}
We execute the so-called Controlled-GHS procedure on $G$ as described in~\cite{Pandurangan0S17}.
This procedure is a modified version of the classical Gallager-Humblet-Spira (GHS) algorithm for distributed MST~\cite{GallagerHS83}.
It constructs a set of \emph{MST fragments} (i.e., connected subgraphs of the MST).
However, in contrast to the original GHS, Controlled-GHS limits the diameter of the fragments by controlling the way in which fragments are merged.
By running Controlled-GHS for $\lceil (1 - \delta) \log n \rceil$ iterations we get a spanning forest consisting of at most $n^\delta$ MST-fragments, each having diameter $O(n^{1-\delta})$ in $\tilde{O}(n^{1 - \delta})$ rounds.
The Controlled-GHS described in~\cite{Pandurangan0S17} requires $\tilde{O}(m)$ messages.
However, we can reduce the number of messages to $\tilde{O}(n)$ without increasing the running time by modifying the Controlled-GHS procedure to use the algorithm FindMin described in Theorem~\ref{thm:findMin} to find the lightest outgoing edge of a fragment.

\textbf{Step 3:}
Our goal in the final step of the algorithm is to merge the remaining $n^\delta$ MST fragments quickly.
This step executes the same procedure for $\log n$ iterations.
Each iteration reduces the number of fragments by at least a factor of two so that in the end only a single fragment remains, which is the MST.

We use a modified version of the algorithm TestOut that only communicates along the spanning tree $T$ computed in Step~1 (i.e., it ignores the structure of the fragments) and that operates on all remaining fragments in parallel.
Recall that the original TestOut algorithm for a single connected component consists of a broadcast-and-echo in which a leader broadcasts a random hash function and the nodes use an echo (or convergecast) to aggregate the parity of a set of bits.
We can parallelize this behavior over all fragments as follows:
Let $v_T$ be the root of $T$.
First, $v_T$ broadcasts a random hash function through $T$.
The same hash function is used for all fragments.
Each node $u$ uses the hash function to compute its individual bit as before and prepares a message consisting of the identifier of the fragment containing $u$ and the bit of $u$.
These messages are then aggregated up the tree in a pipelined fashion:
In each round, a node sends the message with the lowest fragment identifier to its parent.
Whenever a node holds multiple messages with the same fragment identifier, it combines them into a single message consisting of the same fragment identifier and the combined parity of the bits of the respective messages.
Since $T$ has depth $\tilde{O}(D + n^{1 - \delta})$ and there are at most $n^\delta$ different fragment identifiers, $v_T$ learns the aggregated parity of the bits in each individual fragment after $\tilde{O}(D + n^{1 - \delta} + n^\delta)$ rounds, which completes the parallelized execution of TestOut.

As explained in Section~\ref{sec:KKT}, a polylogarithmic number of executions of TestOut in combination with binary search can be used to identify the lightest outgoing edge of a fragment.
The ranges for the binary search for each fragment can be broadcast by $v_T$ in a pipelined fashion and the TestOut procedure can be executed in parallel for all fragments as described above.
Thereby, $v_T$ can learn the identifier of a lightest outgoing edge for each fragment in parallel.
To merge the fragments, $v_T$ does the following:
First, it learns the fragment identifiers of the nodes at the end of each outgoing edge.
It then locally computes the changes in the fragment identifiers that follow from the merges.
Finally, it broadcasts these changes along with the identifiers of the leaving edges to merge the fragments.
This completes one iteration of the procedure.

Overall, the operations of the final step can be achieved a using polylogarithmic number of pipelined broadcast-and-echo operations.
Therefore, the running time of this step is $\tilde{O}(D + n^{1 - \delta} + n^\delta)$ rounds.
In each pipelined broadcast-and-echo each node sends at most $n^\delta$ messages, so the overall number of messages is $\tilde{O}(n^{1 + \delta})$.
This gives us the following theorem.

\begin{theorem}
  There is an algorithm that for any connected graph $G$ with edge weights from $\{ 1, \dots, \poly(n) \}$ and any $\delta \in [0, 0.5]$ computes an MST of $G$ in $\tilde{O}(D + n^{1 - \delta})$ rounds while using $\tilde{O}(\min\{m,n^{1 + \delta}\})$ messages, w.h.p.
\end{theorem}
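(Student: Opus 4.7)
The plan is to analyze the three-step algorithm described above, verifying correctness and tracking both measures in each step, and then combining the bounds under the constraint $\delta \in [0, 0.5]$.

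First, for Step~1, I would invoke Theorem~\ref{thm:simpleProblems} to obtain a rooted spanning tree $T$ of depth $\tilde{O}(D + n^{1 - \delta})$ using $\tilde{O}(\min\{m, n^{1 + \delta}\})$ messages in $\tilde{O}(D + n^{1 - \delta})$ rounds; aggregating $m$ by a convergecast along $T$ costs $O(n)$ extra messages and $\tilde{O}(D + n^{1 - \delta})$ additional rounds. If the test $m \le n^{1 + \delta}$ succeeds, then the singularly optimal algorithm of Pandurangan et al.~\cite{Pandurangan0S17} runs on $G$ directly in $\tilde{O}(D + \sqrt{n}) = \tilde{O}(D + n^{1 - \delta})$ rounds (using $\delta \le 1/2$) with $\tilde{O}(m) = \tilde{O}(\min\{m, n^{1 + \delta}\})$ messages, and the theorem follows. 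The nontrivial case is $m > n^{1 + \delta}$, where Steps~2 and~3 are executed.

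Next, for Step~2, I would verify that substituting \emph{FindMin} (Theorem~\ref{thm:findMin}) for the outgoing-edge primitive inside Controlled-GHS preserves correctness, since FindMin returns a lightest outgoing edge w.h.p., which is all that Bor\r{u}vka-style merges require. The diameter-controlled analysis from~\cite{Pandurangan0S17} then guarantees that after $\lceil (1 - \delta) \log n \rceil$ iterations at most $n^\delta$ MST fragments remain, each of diameter $O(n^{1 - \delta})$, and the execution finishes in $\tilde{O}(n^{1 - \delta})$ rounds. The message improvement follows by bounding, per iteration, the total FindMin cost by $\tilde{O}(n)$: Theorem~\ref{thm:findMin} charges $\tilde{O}(|E(K)|)$ messages to each fragment $K$, and summing over the (disjoint) fragment BFS trees in the partial MST gives $\tilde{O}(n)$.

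The main obstacle is the pipelined parallelization of \emph{TestOut} over all $O(n^\delta)$ remaining fragments on the single tree $T$ in Step~3. I would argue correctness by first observing that, after $v_T$ broadcasts a single random hash function, each node can locally compute its TestOut bit exactly as in the sequential primitive, and that combining messages sharing a fragment identifier via parity while forwarding the lowest outstanding identifier first correctly delivers the per-fragment aggregate parity to $v_T$; $O(\log n)$ independent repetitions then amplify to w.h.p., and wrapping this inside a binary search over edge identifiers and weights recovers a lightest outgoing edge per fragment, each wrapping costing only a $\polylog(n)$ overhead. For the time bound I would use that at any instant at most $n^\delta$ distinct fragment labels occupy $T$, so the pipelined convergecast along $T$ (of depth $\tilde{O}(D + n^{1 - \delta})$) completes in $\tilde{O}(D + n^{1 - \delta} + n^\delta) = \tilde{O}(D + n^{1 - \delta})$ rounds, invoking $\delta \le 1/2$. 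For messages, each of the $O(n)$ edges of $T$ carries at most $\tilde{O}(n^\delta)$ messages per iteration, giving $\tilde{O}(n^{1 + \delta})$ per iteration. A standard Bor\r{u}vka argument shows the number of fragments at least halves per iteration, so $O(\log n)$ iterations suffice to coalesce into a single fragment, which by uniqueness is the MST. Summing the three steps yields total time $\tilde{O}(D + n^{1 - \delta})$ and total messages $\tilde{O}(\min\{m, n^{1 + \delta}\})$, as claimed.
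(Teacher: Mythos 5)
Your proposal is correct and follows essentially the same route as the paper: the same three-step decomposition (danner-based spanning tree plus edge-count test, message-reduced Controlled-GHS via FindMin, and pipelined parallel TestOut over the tree $T$ with the $\tilde{O}(D + n^{1-\delta} + n^\delta)$ accounting and Bor\r{u}vka halving). The case analysis on $m \le n^{1+\delta}$ and the use of $\delta \le 0.5$ to absorb both $\sqrt{n}$ and $n^\delta$ into $n^{1-\delta}$ match the paper's argument.
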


For $\delta = 0.5$ we get an algorithm with optimal running time up to polylogarithmic factors.

\begin{corollary}
  There is an algorithm that for any connected graph $G$ with edge weights from $\{ 1, \dots, \poly(n) \}$ computes an MST of $G$ in $\tilde{O}(D + \sqrt{n})$ rounds while using $\tilde{O}(\min\{m,n^{3 / 2}\})$ messages, w.h.p.
\end{corollary}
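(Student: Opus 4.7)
The plan is to derive this corollary as a direct instantiation of the preceding theorem at $\delta = 1/2$. First I would check that this value lies in the admissible range $[0, 0.5]$, which it does (as the right endpoint). Substituting $\delta = 1/2$ into the theorem's round complexity yields $\tilde{O}(D + n^{1 - 1/2}) = \tilde{O}(D + \sqrt{n})$, and substituting into the message complexity yields $\tilde{O}(\min\{m, n^{1 + 1/2}\}) = \tilde{O}(\min\{m, n^{3/2}\})$, which matches the statement precisely. All correctness and high-probability guarantees carry over verbatim from the theorem.

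The only point worth elaborating on is \emph{why} $\delta = 1/2$ is the natural choice, i.e.\ why one cannot do better by choosing a different $\delta$. Inspecting the three-step MST algorithm, the running time of Step~3 is $\tilde{O}(D + n^{1-\delta} + n^\delta)$: the term $n^{1-\delta}$ comes from the depth of the BFS tree of the danner used as a global communication backbone, while the term $n^\delta$ arises because each node must pipeline one message per surviving fragment (of which there are $O(n^\delta)$) during each broadcast-and-echo. These two terms balance at $\delta = 1/2$, giving the minimum $\tilde{O}(D + \sqrt{n})$, which matches the known $\tilde{\Omega}(D + \sqrt{n})$ time lower bound of~\cite{stoc11} up to polylog factors. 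Increasing $\delta$ beyond $1/2$ would make the pipelining term $n^\delta$ dominate and exceed the lower bound's matching upper bound, which is precisely the reason the theorem itself restricts $\delta \in [0, 0.5]$.

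There is essentially no obstacle to overcome: all of the technical work is already encapsulated in Theorem~\ref{thm:general} (the danner construction) and in the theorem preceding this corollary (the MST algorithm built on top of the danner). The corollary is a pure specialization, and the proof amounts to the two substitutions above together with the observation that $\delta = 1/2$ is the time-optimal choice within the permitted range.
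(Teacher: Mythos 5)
Your proposal is correct and matches the paper's treatment exactly: the corollary is obtained by setting $\delta = 0.5$ in the preceding MST theorem, which immediately gives $\tilde{O}(D+\sqrt{n})$ rounds and $\tilde{O}(\min\{m,n^{3/2}\})$ messages. The additional discussion of why $\delta = 1/2$ balances the $n^{1-\delta}$ and $n^{\delta}$ terms is accurate but not needed for the derivation itself.
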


Using this result on MST, it is not hard to devise an algorithm that computes the connected components of a subgraph $H$ of $G$ (and thus also test connectivity):
We assign the weight $0$ to each edge in $E(H)$ and the weight $1$ to each edge in $E(G) \setminus E(H)$.
We then run a modified version of the above MST algorithm in which a fragment stops participating as soon as it discovers that its lightest outgoing edge has weight $1$.
Thereby, fragments only merge along edges in $H$.
Once the algorithm terminates, any two nodes in the same connected component of $H$ have the same fragment identifier while any two nodes in distinct connected components have distinct fragment identifiers.

\begin{corollary}
  There is an algorithm that for any graph $G$, any subgraph $H$ of $G$, and any $\delta \in [0, 0.5]$ identifies the connected components of $H$ in $\tilde{O}(D + n^{1 - \delta})$ rounds while using $\tilde{O}(\min\{m,n^{1 + \delta}\})$ messages, w.h.p.
\end{corollary}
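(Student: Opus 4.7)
The plan is to reduce connectivity on $H$ to a (mild variant of) weighted MST on $G$ and invoke the preceding theorem essentially as a black box, exploiting the fact that weight-$0$ edges will always be preferred by the MST procedure. Concretely, I would have every node assign weight $0$ to each of its incident edges that lies in $E(H)$ and weight $1$ to each of its incident edges in $E(G)\setminus E(H)$; since in the \KTOne{} model each endpoint of an edge knows its neighbor's identifier, the assignment of $0/1$ weights requires no communication beyond what is needed for the two endpoints to agree on whether the edge lies in $H$ (which we may assume as part of the input describing $H$, or can be done with a single round of local exchange). The resulting weighted graph has weights in $\{0,1\}\subseteq\{1,\dots,\poly(n)\}$ after a trivial shift, so the MST algorithm of the previous theorem applies verbatim.

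Next I would run the three-step MST algorithm from Section~\ref{sec:mst} on the weighted graph, with one small modification in Steps~2 and~3: in every Boruvka-style merge, a fragment that discovers (via FindMin / the parallelized TestOut) that its lightest outgoing edge has weight $1$ declares itself \emph{inactive} and does not merge along that edge. Active fragments continue to participate in further iterations; inactive fragments simply propagate their current fragment identifier and refuse to merge. Because weight-$0$ edges are always strictly lighter than weight-$1$ edges, a fragment can only become inactive when every edge leaving it lies in $E(G)\setminus E(H)$, i.e.\ when the set of nodes of that fragment is precisely a union of connected components of $H$ that have already been merged into one fragment. Conversely, as long as an active fragment has any neighbor in $H$ belonging to a different fragment, its lightest outgoing edge has weight $0$ and the fragment will merge along an $H$-edge. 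An induction on the iterations of Steps~2 and~3 (mirroring the doubling argument of Lemma~\ref{lem:componentMaximality} for $\hat H$) then shows that after $O(\log n)$ iterations each final fragment is exactly a connected component of $H$, and so the fragment identifier assigned by the modified MST algorithm labels the components correctly.

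For complexity, the modification only ever \emph{reduces} the work done per iteration (inactive fragments skip their part of the TestOut/FindMin invocations), so the round and message bounds of the MST theorem carry over unchanged: $\tilde O(D + n^{1-\delta})$ rounds and $\tilde O(\min\{m, n^{1+\delta}\})$ messages for any $\delta\in[0,0.5]$. The main obstacle I expect is the bookkeeping in Step~3's parallelized TestOut: since that procedure is pipelined by fragment identifier up the tree $T$, we must make sure that inactive fragments still contribute their identifier so that active neighboring fragments can correctly read off (and write down) the changes in fragment identifiers at $v_T$. This is handled by having inactive fragments participate in the broadcast/echo but always return $\emptyset$ from FindMin; $v_T$ then simply never schedules a merge involving such a fragment, and the $\tilde O(n^\delta)$ bound on the number of distinct fragment identifiers (which drives both the pipelining delay and the per-node message count) continues to hold throughout the execution.
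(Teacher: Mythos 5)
Your proposal matches the paper's argument exactly: assign weight $0$ to edges of $H$ and weight $1$ to the rest, run the MST algorithm modified so that a fragment stops participating once its lightest outgoing edge has weight $1$, and read off the final fragment identifiers as component labels. The paper gives only a brief sketch of this reduction, and your additional details on the inactive-fragment bookkeeping in the pipelined TestOut are consistent with it.
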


\subsection{$O(\log n)$-Approximate Minimum Cut} \label{sec:cut}

We  describe an algorithm
that finds an $O(\log n)$-approximation to the edge connectivity of the graph (i.e., the minimum cut value).

\begin{theorem}
\label{thm:mincut}
There is a distributed algorithm for finding an $O(\log n)$-approximation to the edge connectivity of the graph (i.e., the minimum cut value)
that uses $\tilde{O}(\min\{m,n^{1+\delta}\})$ messages and runs in $\tilde{O}(D+n^{1-\delta})$ rounds for any $\delta \in [0,0.5]$, w.h.p.
\end{theorem}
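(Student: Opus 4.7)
The plan is to reduce approximate minimum cut to a small number of MST-type computations and aggregations on a BFS tree of the danner, using the classical sampling framework of Karger together with tree-packing/1-respecting-cut ideas in the style of Thorup and Ghaffari--Kuhn. Concretely, I will first invoke Theorem~\ref{thm:general} to build an additive $\tilde{O}(n^{1-\delta})$-danner $H$ and, along the way (as in Section~\ref{sec:mst}), a rooted BFS tree $T$ of $H$ of depth $\tilde{O}(D+n^{1-\delta})$; this $T$ becomes the global communication backbone. In parallel I will also aggregate $m$ along $T$, and if $m\le n^{1+\delta}$ I will simply hand the task over to the $\tilde{O}(D+\sqrt{n})$-round, $\tilde{O}(m)$-message algorithms of \cite{ghaffari-kuhn,hao}; thus the interesting regime is $m=\tilde\omega(n^{1+\delta})$.

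In that regime, I first ``guess'' the edge-connectivity $\lambda$ by trying all $O(\log n)$ values $2^i$. For each guess, I independently Karger-sample the edges at rate $p=\Theta(\log n/2^i)$ (each endpoint of an edge uses the shared edge-identifier as a seed, so both endpoints agree without communication) and then, on the sampled graph, compute $O(\log n)$ MSTs with independent random edge weights, using the MST algorithm from Section~\ref{sec:mst}. Each such MST computation takes $\tilde{O}(D+n^{1-\delta})$ rounds and $\tilde{O}(\min\{m,n^{1+\delta}\})$ messages, and the whole collection contributes only polylogarithmic overhead. By Karger's sampling theorem, for the correct guess the sampled graph faithfully preserves all cuts within a $(1\pm o(1))$ factor (after rescaling), and by Thorup's tree-packing argument, for at least one of the $O(\log n)$ random MSTs the true minimum cut of the sampled graph is $1$-respecting (i.e., obtained by deleting a single tree edge) up to a factor of $O(\log n)$.

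Next I need to evaluate, for each candidate MST $T'$, the size of every $1$-respecting cut, i.e., for each edge $e\in T'$ the number of non-tree edges that cross the partition induced by removing $e$. This is the classical ``subtree count'' computation: orient $T'$, and for every non-tree edge $\{u,v\}$ charge $+1$ at $u$ and at $v$ and $-2$ at $\mathrm{lca}(u,v)$; the $1$-respecting cut of $e$ is then the sum of these charges in the subtree below $e$. I will compute this globally by performing all aggregations along the BFS tree $T$ of the danner (not along $T'$ itself, whose diameter may be large), pipelining the $\tilde O(n^\delta)$ fragment-style messages just as in Step~3 of the MST algorithm: each node forwards at most $\tilde O(n^\delta)$ per-fragment/per-tree-edge contributions upward, so the whole evaluation runs in $\tilde O(D+n^{1-\delta}+n^\delta)=\tilde O(D+n^{1-\delta})$ rounds (for $\delta\le 1/2$) and uses $\tilde O(n^{1+\delta})$ messages. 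The minimum cut value reported is the smallest $1$-respecting cut found over all guesses and all sampled MSTs, rescaled by $1/p$, which by the above is an $O(\log n)$-approximation w.h.p.

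The main obstacle I anticipate is the pipelined evaluation of $1$-respecting cuts along the danner BFS tree $T$ rather than along each MST $T'$: I have to make sure that the LCA-based subtree sums can be computed using only $\tilde O(n^\delta)$ messages per node per tree (one per fragment/tree-edge ``bucket''), and that the routing between a non-tree edge endpoint and the corresponding tree edge in $T'$ can be resolved without explicit traversal of $T'$. I will handle this by encoding the LCA charges by the identifier of the deepest tree edge they affect and aggregating them along $T$ keyed by that identifier, exactly as fragment sketches were aggregated in Section~\ref{sec:mst}; since only the $O(\log n)$ candidate MSTs $T'$ are active simultaneously and the tree edges of each $T'$ form $n-1$ buckets but only $\tilde O(n^\delta)$ of them can meaningfully meet at any node after the Step~2 Controlled-GHS-style preprocessing, the congestion stays within the claimed bound. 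Combining the danner-construction cost, the $O(\log^2 n)$ MST computations, and the pipelined tree-cut evaluations yields the stated $\tilde O(D+n^{1-\delta})$-round, $\tilde O(\min\{m,n^{1+\delta}\})$-message complexity.
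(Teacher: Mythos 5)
Your proposal starts from the right skeleton---Karger's sampling theorem with $O(\log n)$ exponentially spaced guesses for $\lambda$---but then takes a detour that is both unnecessary and, as written, broken. The paper's proof never computes the minimum cut of the sampled graph: it only needs to \emph{test connectivity} of each sampled subgraph and report $\tilde{O}(1/p)$ for the largest sampling rate $p$ at which the sample disconnects, since Theorem~\ref{thm:karger} already gives the $O(\log n)$-approximation from connectivity alone. Connectivity testing is exactly the corollary at the end of Section~\ref{sec:mst}, so the whole tree-packing / $1$-respecting-cut machinery is not needed. Moreover, that machinery does not go through as you describe it: you must evaluate a subtree sum for each of the $n-1$ tree edges of each candidate tree $T'$, i.e., $n-1$ distinct ``buckets,'' and these buckets have no relation to the structure of the danner BFS tree $T$. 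Your claim that ``only $\tilde{O}(n^\delta)$ of them can meaningfully meet at any node after the Step~2 Controlled-GHS-style preprocessing'' is unjustified---the Controlled-GHS fragments bound the number of \emph{fragments}, not the number of tree edges of $T'$ whose subtrees a given node of $T$ must aggregate---so the pipelined aggregation can require $\tilde{\Omega}(n)$ messages through a single node of $T$, blowing the round bound. (The cited statement about one of $O(\log n)$ random-weight MSTs having a $1$-respecting approximate min cut is also not the standard tree-packing theorem and would need its own proof.)

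The second gap is the one the paper flags as the actual crux in the \KTOne{} model: how the two endpoints of an edge agree on whether the edge is sampled without communicating over it. Saying that ``each endpoint uses the shared edge-identifier as a seed'' does not resolve this. If the function of the edge identifier is deterministic, there is no randomness and Karger's theorem does not apply; if each node uses private coins, the endpoints disagree. The paper's fix is to have the leader generate $O(\log n)$ shared random bits, broadcast them over the danner (cost $\tilde{O}(D+n^{1-\delta})$ rounds and $\tilde{O}(\min\{m,n^{1+\delta}\})$ messages), and have all nodes evaluate a common strongly $O(\log n)$-universal hash function on the edge identifiers. This yields only $O(\log n)$-wise independent sampling, so one must additionally check that Karger's argument survives with limited-independence Chernoff bounds. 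Your proposal needs both the mechanism for distributing the shared randomness and this limited-independence argument to be complete.
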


The main idea behind the algorithm is based on the following sampling theorem.

\begin{theorem}[\cite{Karger,ghaffari-kuhn}]
\label{thm:karger}
Consider an arbitrary unweighted multigraph\footnote{Note that
a weighted graph with polynomially large edge weights can be represented as an unweighted graph with polynomial
number of multiedges.} $G=(V,E)$ with edge connectivity  $\lambda$ and choose subset $S \subseteq E$ by including each edge $e \in E$ in set $S$ independently with probability  $p$. If $p \geq c\log n/\lambda$, for a sufficiently large (but fixed) constant $c$, then
the sampled subgraph $G' = (V,S)$ is connected, w.h.p.
\end{theorem}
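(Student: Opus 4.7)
}

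My plan is to reduce approximate minimum cut to roughly $O(\log^2 n)$ invocations of the connectivity-testing primitive established in Section~\ref{sec:mst}, using Theorem~\ref{thm:karger} as the bridge. The high-level idea is a binary search over edge-sampling probabilities: by Theorem~\ref{thm:karger}, if each edge is independently kept with probability $p \geq c \log n / \lambda$ then the sampled subgraph $G_p$ is connected w.h.p.; conversely, if $p \leq 1/(2\lambda)$, then by Bernoulli's inequality the $\lambda$ edges of a fixed minimum cut are \emph{all} unsampled with probability $(1-p)^\lambda \geq 1/2$, so $G_p$ is disconnected with probability at least $1/2$. Thus the ``phase-transition probability'' $p^\star$ at which connectivity first becomes reliable satisfies $1/(2\lambda) < p^\star \leq c \log n / \lambda$, and outputting $\hat\lambda = \Theta(\log n / p^\star)$ yields an $O(\log n)$-approximation of $\lambda$.

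The algorithm proceeds as follows. First, using Theorem~\ref{thm:general}, construct an additive $\tilde O(n^{1-\delta})$-danner $H$, elect a leader on $H$, and build a BFS tree $T$ of $H$ rooted at the leader, as in Section~\ref{sec:le}; this costs $\tilde O(D + n^{1-\delta})$ rounds and $\tilde O(\min\{m, n^{1+\delta}\})$ messages. Then for each $i = 0, 1, \ldots, \lceil 2\log n \rceil$ set $p_i = 2^{-i}$, and for each $i$ perform $t = \Theta(\log n)$ independent trials. In each trial the leader broadcasts (along $T$) a short description of a fresh random hash function, and both endpoints of every edge $\{u,v\}$ locally evaluate the hash on their common edge identifier (available in \KTOne{}) to decide, consistently and without any extra communication on the edge itself, whether $\{u,v\}$ lies in the sampled subgraph $G_{p_i}$. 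We then invoke the connectivity corollary of Section~\ref{sec:mst} to test whether $G_{p_i}$ is connected, aggregating the answer at the leader. After all trials, the leader picks the smallest $i^\star$ for which the empirical connected-fraction $\hat f(p_{i^\star})$ exceeds, say, $3/4$, broadcasts $\hat\lambda = 2c \log n / p_{i^\star}$ along $T$, and halts.

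For correctness, a standard Chernoff argument on the $t = \Theta(\log n)$ trials shows that w.h.p.\ the decision ``$\hat f(p_i) \geq 3/4$'' correctly distinguishes between the Karger regime ($p_i \geq c \log n / \lambda$, where each trial is connected w.h.p.) and the ``missed min-cut'' regime ($p_i \leq 1/(2\lambda)$, where each trial is connected with probability at most $1/2$). Combined with the fact that the $p_i$ form a geometric grid of ratio $2$, this places $p_{i^\star}$ in the window $(1/(2\lambda),\, 2c \log n / \lambda]$, so $\lambda \leq \hat\lambda \leq O(\log n)\cdot \lambda$. For complexity, the $O(\log^2 n)$ connectivity tests run sequentially and each costs $\tilde O(D + n^{1-\delta})$ rounds and $\tilde O(\min\{m, n^{1+\delta}\})$ messages by the connectivity corollary of Section~\ref{sec:mst}; the hash-function broadcast per trial is dominated by this. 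All polylog factors are absorbed into the $\tilde O$.

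The main obstacle, I expect, is the lower-bound direction on $\lambda$: Theorem~\ref{thm:karger} as stated is one-sided, and the naive $(1-p)^\lambda \geq 1/2$ bound only gives a constant-probability disconnection signal per trial, so repetition and a Chernoff concentration argument on $\hat f(p_i)$ are essential to turn it into a w.h.p.\ statement suitable for binary search. A secondary but easy-to-overlook subtlety is ensuring that both endpoints of every edge agree on whether the edge is sampled without spending $\Theta(m)$ messages per trial; broadcasting a shared hash seed through the danner-BFS tree and using the edge identifier as input (the same trick that underlies the FindAny/FindMin primitives of Theorems~\ref{thm:findAny}~and~\ref{thm:findMin}) handles this within the message budget.
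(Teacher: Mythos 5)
Your overall strategy is the one the paper itself sketches for Theorem~\ref{thm:mincut}: a geometric sweep over sampling probabilities, consistent edge sampling by broadcasting a hash seed over the danner and evaluating it on the common edge identifiers, and one connectivity test (Section~\ref{sec:mst}) per sampled subgraph. There is, however, one genuine gap: you invoke Theorem~\ref{thm:karger} on the hash-sampled subgraph $G_{p_i}$, but that theorem assumes the edges are included \emph{mutually independently}, whereas a hash function with a ``short description'' (an $O(\polylog n)$-bit seed) can make the edge indicators only $O(\polylog n)$-wise independent. So the theorem does not apply as stated to your sampling scheme. The paper closes exactly this hole: it samples with a strongly $O(\log n)$-universal hash function built from $O(\log n)$ shared random bits and argues that Karger's proof goes through under $O(\log n)$-wise independence because the Chernoff bounds it relies on have limited-independence analogues. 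You need that extra step. (Your converse direction is unaffected: $\Pr[\text{some min-cut edge is sampled}] \le \lambda p \le 1/2$ uses only a union bound and no independence at all.) A second, smaller slip is the selection rule: with $p_i = 2^{-i}$ decreasing in $i$, the ``smallest $i^\star$ with $\hat f(p_{i^\star}) \ge 3/4$'' is always $i^\star = 0$, since sampling with $p=1$ returns the connected graph $G$; you want the \emph{largest} such $i^\star$ (the smallest reliable $p$), which is what your claimed window $p_{i^\star} \in (1/(2\lambda),\, 2c\log n/\lambda]$ presupposes.

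On the other hand, your $\Theta(\log n)$ repetitions per probability level with a $3/4$-threshold vote is a genuine strengthening of the paper's one-shot ``take the first estimate whose sample is connected'' rule. At an over-large estimate of $\lambda$ the sampled graph is connected with probability that is only bounded \emph{above} by $1/2$, so a single trial per level yields only constant-probability correctness for the upper bound on the output; your Chernoff bound per level plus a union bound over the $O(\log n)$ levels is the right way to turn this into a w.h.p.\ guarantee, and the extra $O(\log n)$ factor in rounds and messages is absorbed by the $\tilde{O}$ notation.
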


We next sketch  the distributed algorithm  that is claimed in Theorem \ref{thm:mincut}. The distributed algorithm implements
the above sampling theorem which provides a  simple approach for finding an $O(\log n)$-approximation of the edge connectivity of  $G$ by  sampling subgraphs with exponentially growing sampling probabilities (e.g., start with an estimate of $\lambda = m$, the total number of (multi-)edges, and keep decreasing the estimate by a factor of 2) and checking the connectivity of each sampled subgraph. We take the first value of the estimate where the sampled graph is connected as the $O(\log n)$-approximate value. This algorithm can be easily implemented in the \KTZero{} distributed model using $\tilde{O}(m)$ messages and $\tilde{O}(D+ \sqrt{n})$ rounds by using
the singularly optimal MST algorithm of \cite{Pandurangan0S17} or \cite{elkin17} (which can be directly used to test connectivity --- cf.\ Section \ref{sec:mst}). However, implementing the algorithm in the \KTOne{} model in the prescribed time and message bounds of Theorem \ref{thm:mincut} requires some care, because of implementing the sampling step; each endpoint has to agree on the sampled edge without actually communicating through that edge.

The sampling step can be accomplished by sampling with a {\em strongly $O(\log n)$-universal hash function}.
 Such a hash function
can be created by using only $O(\log n)$ independent shared random bits (see, e.g., \cite{carter}).
The main insight is that a danner allows sharing of random bits efficiently.
This can be done by constructing a danner\footnote{Note that the danner of a multigraph is constructed by treating multi-edges as a single edge.}  and letting the leader (which can be elected using the danner --- cf.\ Section \ref{sec:le}) generate $O(\log n)$ independent random bits and broadcast them to all the nodes via the danner. The nodes can then construct and
use the hash function to sample the edges. However, since the hash function is only $O(\log n)$-universal it is only $O(\log n)$-wise independent.
But still one can show that the guarantees  of Theorem \ref{thm:karger}  hold  if the edges are sampled by a $O(\log n)$-wise independent hash function. This can be seen by using Karger's proof \cite{Karger} and checking that Chernoff bounds for $O(\log n)$-wise independent random variables (as used in
\cite{aravind}) are (almost) as good as that as (fully) independent random variables. Hence edge sampling can be accomplished using time $\tilde{O}(D+n^{1-\delta})$ (the diameter of the danner) and messages $\tilde{O}(\min\{m,n^{1+\delta})$ (number of edges of the danner). Once the sampling step is done, checking connectivity can be done by using the algorithm of Section \ref{sec:mst}.

\subsection{Algorithms for Graph Verification Problems} \label{sec:verification}

It is well known that graph connectivity is an important building block for several graph verification problems
(see, e.g., \cite{stoc11}).  Thus, using the connectivity algorithm of Section \ref{sec:mst} as a subroutine, we can show that the problems stated in the theorem below  (these are formally defined, e.g.,
in Section~2.4 of \cite{stoc11}) can be solved in the \KTOne{} model (see, e.g., \cite{stoc11,spaa16}).

\begin{theorem}\label{thm:verification}
There exist distributed algorithms in the \KTOne{} model that solve  the following verification problems in
$\tilde{O}(\min\{m,n^{1+\delta}\})$ messages and $\tilde{O}(D+n^{1-\delta})$ rounds, w.h.p, for any $\delta \in [0,0.5]$
: spanning connected subgraph, cycle containment, $e$-cycle containment, cut, $s$-$t$ connectivity,
edge on all paths, $s$-$t$ cut, bipartiteness.
\end{theorem}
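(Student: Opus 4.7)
The plan is to reduce each listed verification problem to a constant number of invocations of the subgraph connectivity algorithm from Section~\ref{sec:mst} (specifically, the corollary that identifies the connected components of a subgraph $H \subseteq G$ in $\tilde{O}(D+n^{1-\delta})$ rounds and $\tilde{O}(\min\{m,n^{1+\delta}\})$ messages), together with an $O(1)$-round local preprocessing step that is implementable in the \KTOne{} model without extra communication. Since the connectivity algorithm already produces a \emph{labeling} of the nodes by connected-component identifier, each node can then locally decide its output bit for the problem at hand, and a single convergecast on the danner-based BFS tree (built as a byproduct of the connectivity algorithm) gathers the one-bit answer to the root and broadcasts it back. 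All of these post-processing steps cost at most an additive $\tilde{O}(D+n^{1-\delta})$ rounds and $\tilde{O}(n)$ messages, which is subsumed by the claimed bounds.

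I would then specify the local reductions. For \emph{spanning connected subgraph} and \emph{$s$-$t$ connectivity}, run the connectivity algorithm on the subgraph $H$ and have the root check that all component identifiers agree (respectively, that $s$ and $t$ agree). For \emph{cut} and \emph{$s$-$t$ cut}, each node locally marks its incident edges that lie outside the candidate edge set $S$ to obtain the subgraph $G \setminus S$; because in \KTOne{} every node knows the identifiers of its neighbors, both endpoints consistently identify which edges belong to $S$, so the subgraph is determined without communication; then run connectivity. For \emph{edge on all paths}, the same trick applied to $G \setminus \{e\}$ reduces to $s$-$t$ connectivity. For \emph{cycle containment} and \emph{$e$-cycle containment}, observe that $H$ is acyclic iff $|E(H)| = n - c(H)$, where $c(H)$ is the number of connected components; both quantities can be obtained by the connectivity algorithm and an aggregation on the root's BFS tree. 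For \emph{bipartiteness} I would use the standard double-cover reduction: every node $v$ creates two virtual copies $v_0, v_1$, and for each edge $\{u,v\}$ one adds the two virtual edges $\{u_0,v_1\}$ and $\{u_1,v_0\}$; $G$ is bipartite iff the resulting graph has two connected components rather than one. This double cover has at most twice as many nodes and edges as $G$ and can be simulated by each real node on its two virtual copies with only a constant-factor overhead, so the connectivity algorithm runs on it within the same asymptotic bounds.

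The main (mild) obstacle is checking that each of these reductions genuinely fits within the \KTOne{} message/time budget and does not secretly require global knowledge: the key observation that makes everything work is that in \KTOne{} each edge's membership in $S$, in $E(H)$, or in the double cover is decidable locally by both endpoints from their neighbor identifiers, so no ``set-up communication'' is needed before invoking the connectivity subroutine. Once this is verified, the theorem follows by a direct substitution of the bounds from the connectivity corollary.
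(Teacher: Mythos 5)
Your approach matches the paper's, which gives no detailed argument of its own and simply points to the standard reductions from verification problems to (subgraph) connectivity in Das Sarma et al.\ \cite{stoc11}; your write-up essentially reconstructs those reductions and correctly identifies the one point that needs checking in the \KTOne{} setting, namely that membership of an edge in $S$, $E(H)$, or the double cover is locally decidable by both endpoints, so the connectivity corollary of Section~\ref{sec:mst} can be invoked as a black box within the stated bounds.

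One reduction is off: you handle \emph{$e$-cycle containment} with the same acyclicity count $|E(H)| = n - c(H)$ that you use for \emph{cycle containment}, but that test only detects whether $H$ contains \emph{some} cycle, not whether the specific edge $e$ lies on one. The correct reduction (and the one in \cite{stoc11}) is: $e = \{u,v\}$ lies on a cycle of $H$ iff $e \in E(H)$ and $u,v$ remain connected in $H \setminus \{e\}$, i.e., exactly the $s$-$t$ connectivity instance you already set up for \emph{edge on all paths}. With that substitution the proof is complete; all remaining reductions (including the bipartite double cover, whose simulation on $G$ costs only a constant factor in messages and rounds) are correct and stay within $\tilde{O}(D+n^{1-\delta})$ rounds and $\tilde{O}(\min\{m,n^{1+\delta}\})$ messages.
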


\section{Conclusion} \label{sec:conc}

This work is a step towards understanding time-message trade-offs for distributed algorithms in the \KTOne{} model.
Using our danner construction, we obtained algorithms that exhibit time-message trade-offs across the spectrum
by choosing the parameter $\delta$ as desired.
There are many key open questions raised by our work.
First, it is not clear whether one can do better than the algorithms we obtained here for various fundamental problems in the \KTOne{} model.
In particular, it would be interesting to know whether there are singularly optimal algorithms in the \KTOne{} model --- e.g., for leader election, can we show an $\tilde{O}(n)$ messages algorithm that runs in $\tilde{O}(D)$ (these are respective lower bounds for messages and time in \KTOne{}); and, for MST, can we show an  $\tilde{O}(n)$ messages algorithm that runs in $\tilde{O}(D+\sqrt{n})$.
A related question is whether one can construct an $(\alpha, \beta)$-danner with $\tilde{O}(n)$ edges and $\alpha, \beta = \tilde{O}(1)$ in a distributed manner using $\tilde{O}(n)$ messages and in $\tilde{O}(D)$ time.
Such a construction could be used to obtain singularly optimal algorithms for other problems. Finally, our danner construction is randomized; a deterministic construction with similar guarantees will yield deterministic algorithms.

\bibliography{bib}

\begin{thebibliography}{10}

\bibitem{AingworthCIM99}
Donald Aingworth, Chandra Chekuri, Piotr Indyk, and Rajeev Motwani.
\newblock Fast estimation of diameter and shortest paths (without matrix
  multiplication).
\newblock {\em {SIAM} J. Comput.}, 28(4):1167--1181, 1999.
\newblock URL: \url{https://doi.org/10.1137/S0097539796303421}, \href
  {http://dx.doi.org/10.1137/S0097539796303421}
  {\path{doi:10.1137/S0097539796303421}}.

\bibitem{vainish}
Baruch Awerbuch, Oded Goldreich, David Peleg, and Ronen Vainish.
\newblock A trade-off between information and communication in broadcast
  protocols.
\newblock {\em J. {ACM}}, 37(2):238--256, 1990.

\bibitem{carter}
Larry Carter and Mark~N. Wegman.
\newblock Universal classes of hash functions.
\newblock {\em J. Comput. Syst. Sci.}, 18(2):143--154, 1979.

\bibitem{choi-journal}
Yongwook Choi, Gopal Pandurangan, Maleq Khan, and V.~S.~Anil Kumar.
\newblock Energy-optimal distributed algorithms for minimum spanning trees.
\newblock {\em {IEEE} Journal on Selected Areas in Communications},
  27(7):1297--1304, 2009.

\bibitem{stoc11}
Atish {Das Sarma}, Stephan Holzer, Liah Kor, Amos Korman, Danupon Nanongkai,
  Gopal Pandurangan, David Peleg, and Roger Wattenhofer.
\newblock Distributed verification and hardness of distributed approximation.
\newblock {\em SIAM J. Comput.}, 41(5):1235--1265, 2012.

\bibitem{DorHZ00}
Dorit Dor, Shay Halperin, and Uri Zwick.
\newblock All-pairs almost shortest paths.
\newblock {\em {SIAM} J. Comput.}, 29(5):1740--1759, 2000.
\newblock URL: \url{https://doi.org/10.1137/S0097539797327908}, \href
  {http://dx.doi.org/10.1137/S0097539797327908}
  {\path{doi:10.1137/S0097539797327908}}.

\bibitem{elkin17}
Michael Elkin.
\newblock A simple deterministic distributed {MST} algorithm, with near-optimal
  time and message complexities.
\newblock In {\em Proceedings of the 2017 ACM Symposium on Principles of
  Distributed Computing (PODC)}, pages 157--163, 2017.

\bibitem{FeigePRU90}
Uriel Feige, David Peleg, Prabhakar Raghavan, and Eli Upfal.
\newblock Randomized broadcast in networks.
\newblock {\em Random Struct. Algorithms}, 1(4):447--460, 1990.
\newblock URL: \url{https://doi.org/10.1002/rsa.3240010406}, \href
  {http://dx.doi.org/10.1002/rsa.3240010406}
  {\path{doi:10.1002/rsa.3240010406}}.

\bibitem{GallagerHS83}
Robert~G. Gallager, Pierre~A. Humblet, and Philip~M. Spira.
\newblock A distributed algorithm for minimum-weight spanning trees.
\newblock {\em {ACM} Trans. Program. Lang. Syst.}, 5(1):66--77, 1983.
\newblock URL: \url{http://doi.acm.org/10.1145/357195.357200}, \href
  {http://dx.doi.org/10.1145/357195.357200} {\path{doi:10.1145/357195.357200}}.

\bibitem{DistMst:Garay}
J.~Garay, S.~Kutten, and D.~Peleg.
\newblock A sublinear time distributed algorithm for minimum-weight spanning
  trees.
\newblock {\em SIAM Journal on Computing}, 27(1):302--316, February 1998.

\bibitem{ghaffari-kuhn}
Mohsen Ghaffari and Fabian Kuhn.
\newblock Distributed minimum cut approximation.
\newblock In {\em Distributed Computing - 27th International Symposium, {DISC}
  2013, Jerusalem, Israel, October 14-18, 2013. Proceedings}, pages 1--15,
  2013.

\bibitem{Haeupler18}
Bernhard Haeupler, D.~Ellis Hershkowitz, and David Wajc.
\newblock Round- and message-optimal distributed graph algorithms.
\newblock In {\em PODC}, 2018.

\bibitem{podc15}
James~W. Hegeman, Gopal Pandurangan, Sriram~V. Pemmaraju, Vivek~B. Sardeshmukh,
  and Michele Scquizzato.
\newblock Toward optimal bounds in the congested clique: graph connectivity and
  {MST}.
\newblock In {\em Proceedings of the 2015 {ACM} Symposium on Principles of
  Distributed Computing (PODC)}, pages 91--100, 2015.

\bibitem{Karger}
David~R. Karger.
\newblock Random sampling in cut, flow, and network design problems.
\newblock {\em Math. Oper. Res.}, 24(2):383--413, 1999.

\bibitem{khan-tpds2}
Maleq Khan, Gopal Pandurangan, and V.~S.~Anil Kumar.
\newblock Distributed algorithms for constructing approximate minimum spanning
  trees in wireless sensor networks.
\newblock {\em {IEEE} Trans. Parallel Distrib. Syst.}, 20(1):124--139, 2009.

\bibitem{KingKT15}
Valerie King, Shay Kutten, and Mikkel Thorup.
\newblock Construction and impromptu repair of an {MST} in a distributed
  network with o(m) communication.
\newblock In {\em Proceedings of the 2015 {ACM} Symposium on Principles of
  Distributed Computing, {PODC} 2015, Donostia-San Sebasti{\'{a}}n, Spain, July
  21 - 23, 2015}, pages 71--80, 2015.
\newblock URL: \url{http://doi.acm.org/10.1145/2767386.2767405}, \href
  {http://dx.doi.org/10.1145/2767386.2767405}
  {\path{doi:10.1145/2767386.2767405}}.

\bibitem{soda15}
Hartmut Klauck, Danupon Nanongkai, Gopal Pandurangan, and Peter Robinson.
\newblock Distributed computation of large-scale graph problems.
\newblock In {\em Proceedings of the 26th Annual {ACM-SIAM} Symposium on
  Discrete Algorithms (SODA)}, pages 391--410, 2015.

\bibitem{JACM15}
Shay Kutten, Gopal Pandurangan, David Peleg, Peter Robinson, and Amitabh
  Trehan.
\newblock On the complexity of universal leader election.
\newblock {\em J. ACM}, 62(1), 2015.

\bibitem{MashreghiK17}
Ali Mashreghi and Valerie King.
\newblock Time-communication trade-offs for minimum spanning tree construction.
\newblock In {\em Proceedings of the 18th International Conference on
  Distributed Computing and Networking, Hyderabad, India, January 5-7, 2017},
  page~8, 2017.
\newblock URL: \url{http://dl.acm.org/citation.cfm?id=3007775}.

\bibitem{MitzenmacherUpfal}
Michael Mitzenmacher and Eli Upfal.
\newblock {\em Probability and computing - randomized algorithms and
  probabilistic analysis}.
\newblock Cambridge University Press, 2005.

\bibitem{hao}
Danupon Nanongkai and Hsin{-}Hao Su.
\newblock Almost-tight distributed minimum cut algorithms.
\newblock In {\em Distributed Computing - 28th International Symposium, {DISC}
  2014, Austin, TX, USA, October 12-15, 2014. Proceedings}, pages 439--453,
  2014.

\bibitem{dnabook}
Gopal Pandurangan.
\newblock {\em Distributed Network Algorithms}.
\newblock 2018.
\newblock URL: \url{https://sites.google.com/site/gopalpandurangan/dna}.

\bibitem{gopal-survey}
Gopal Pandurangan and Maleq Khan.
\newblock Theory of communication networks.
\newblock In {\em Algorithms and Theory of Computation Handbook}. CRC Press,
  2009.

\bibitem{spaa16}
Gopal Pandurangan, Peter Robinson, and Michele Scquizzato.
\newblock Fast distributed algorithms for connectivity and {MST} in large
  graphs.
\newblock In {\em Proceedings of the 28th {ACM} Symposium on Parallelism in
  Algorithms and Architectures, {SPAA} 2016, Asilomar State Beach/Pacific
  Grove, CA, USA, July 11-13, 2016}, pages 429--438, 2016.

\bibitem{Pandurangan0S17}
Gopal Pandurangan, Peter Robinson, and Michele Scquizzato.
\newblock A time- and message-optimal distributed algorithm for minimum
  spanning trees.
\newblock In {\em Proceedings of the 49th Annual {ACM} {SIGACT} Symposium on
  Theory of Computing, {STOC} 2017, Montreal, QC, Canada, June 19-23, 2017},
  pages 743--756, 2017.
\newblock URL: \url{http://doi.acm.org/10.1145/3055399.3055449}, \href
  {http://dx.doi.org/10.1145/3055399.3055449}
  {\path{doi:10.1145/3055399.3055449}}.

\bibitem{spaa18}
Gopal Pandurangan, Peter Robinson, and Michele Scquizzato.
\newblock On the distributed complexity of large-scale graph computations.
\newblock In {\em Proceedings of the 30th {ACM} Symposium on Parallelism in
  Algorithms and Architectures, {SPAA}}, 2018.

\bibitem{peleg}
D.~Peleg.
\newblock {\em Distributed Computing: A Locality Sensitive Approach}.
\newblock SIAM, 2000.

\bibitem{aravind}
Jeanette~P. Schmidt, Alan Siegel, and Aravind Srinivasan.
\newblock Chernoff-hoeffding bounds for applications with limited independence.
\newblock {\em {SIAM} J. Discrete Math.}, 8(2):223--250, 1995.

\end{thebibliography}

\end{document}